\newcommand{\caF}{{\mathcal F}}
\newcommand{\caU}{{\mathcal U}}
\newcommand{\caZ}{{\mathcal Z}}
\newcommand{\bbC}{{\mathbb C}}
\newcommand{\bbN}{{\mathbb N}}
\newcommand{\bbR}{{\mathbb R}}
\newcommand{\ie}{{\it i.e.\/} }
\newcommand{\iu}{\mathrm{i}}
\newcommand{\str}{^*}
\newcommand{\Tr}{\operatorname{Tr}}
\newcommand{\ep}[1]{\mathrm{e}^{#1}}
\newcommand{\dif}{\mathrm{d}}
\newcommand{\Idif}{\,\mathrm{d}}
\newcommand{\ket}[1]{| #1 \rangle}
\newcommand{\bra}[1]{\langle #1 |}
\newcommand{\tr}{\operatorname{tr}}
\renewcommand{\Re}{\operatorname{Re}}
\renewcommand{\d}{\mathrm{d}}
 \newtheorem{thm}{Theorem}
 \newtheorem{cor}[thm]{Corollary}
 \newtheorem{lemma}[thm]{Lemma}
 \newtheorem{prop}[thm]{Proposition}
\newcommand{\comment}[1]{}
\newcommand{\rate}{\alpha}
\newcommand{\rtn}{\beta}
\newcommand{\No}{\caZ}
\author{Sven Bachmann
\\
\small{Mathematisches Institut der Universit{\"a}t M{\"u}nchen, 80333 M{\"u}nchen, Germany}
\\
Martin Fraas
\\
\small{980 Ohlone Ave. 991, Albany, CA 94706, USA}
\\ Gian Michele Graf
\\
\small{Theoretische Physik, ETH Z\"urich, 8093 Z\"urich, Switzerland} }
\begin{document}

\title{Dynamical crossing of an infinitely degenerate critical point}
\date{\today}%
\maketitle

\begin{abstract}
We study the evolution of a driven harmonic oscillator with a time-dependent frequency $\omega_t \propto |t|$. At time $t=0$ the Hamiltonian undergoes a point of infinite spectral degeneracy. If the system is initialized in the instantaneous vacuum in the distant past then the asymptotic future state is a squeezed state whose parameters are explicitly determined. We show that the squeezing is independent on the sweeping rate. This manifests the failure of the adiabatic approximation at points where infinitely many eigenvalues collide. We extend our analysis to the situation where the gap at $t=0$ remains finite. We also discuss the natural geometry of the manifold of squeezed states. We show that it is realized by the Poincar\'e disk model viewed as a K\"ahler manifold.
\end{abstract} 

\section{Introduction}

In a generic situation, the energy levels of a parameter-dependent Hamiltonian may get very close to each other but do not cross. Such an `avoided crossing' generically has a hyperbolic shape, and the Landau-Zener Hamiltonian is a paradigmatic model of a driven dynamics undergoing such a crossing. On the other hand, degenerate energy points may occur because of symmetry reasons. This is typically the case at quantum phase transitions, where the energy of a finite or an infinite number of low-lying excited states equals the ground state energy at the critical point. A theoretically well-understood and experimentally relevant model of this phenomenon is the Dicke model~\cite{Dicke}.

The Dicke model describes the interaction of a large number of two level atoms coupled to a single photonic mode. It exhibits a quantum phase transition at a critical coupling strength. The ground state below the critical coupling has no photons and all atoms are in their respective ground states, while above the critical coupling there is a macroscopic amount of excitations of both field and atoms in the ground state, see~\cite{HeppLieb} for a complete picture in the rotating wave approximation. In between a macroscopic number of eigenmodes all `collapse' onto the ground state.

Let the system be prepared in its ground state at zero coupling. The coupling is then ramped at a sweeping rate $\rate$ across the critical point. If the adiabatic approximation were to hold, the system should be at all times in a state that is close to its instantaneous ground state as the rate goes to zero. Discrepancies to this rule indicate diabatic transitions. They are observed in experiments~\cite{Esslinger}.

The subject of this letter is an exactly solvable model of such a dynamical transition across an infinitely degenerate critical point. We show that the system is far from its ground state after the crossing and we in fact compute explicitly the asymptotic distribution in the energy levels. The `symmetric' shape of the crossing in the model described below is non-generic, and although it was motivated by it, it is not meant to predict the exact distribution in the Dicke model. Indeed, the dependence of the non-adiabatic error on the sweeping rate is not universal, the scaling of the error being predicted by the Kibble-Zurek theory \cite{Zurek}. Finally, we note that for finitely degenerate points the adiabatic approximation does hold, but with a shape dependent rate~\cite{Hagedorn}.

\section{The result}

Concretely, we consider the time-dependent harmonic oscillator with a frequency $\omega_t$ and Hamiltonian given by
\begin{equation}\label{Hamiltonian}
H_t = \frac{1}{2}\bigl(p^2 + \omega_t^2 x^2\bigr)
\end{equation}
on $L^2(\bbR_x)$. Whenever $\omega_t=0$, the discrete spectrum $\{(n+\frac{1}{2})\omega_t: n\in\bbN\}$ collapses to the  purely absolutely continuous spectrum $[0,\infty)$. We shall study the dynamics generated by this Hamiltonian with frequency $\omega_t = \rate |t|$ characterized by a constant angular acceleration $\rate>0$, see Figure~\ref{fig:levels}.

\begin{figure}[htb]
\centering
\includegraphics{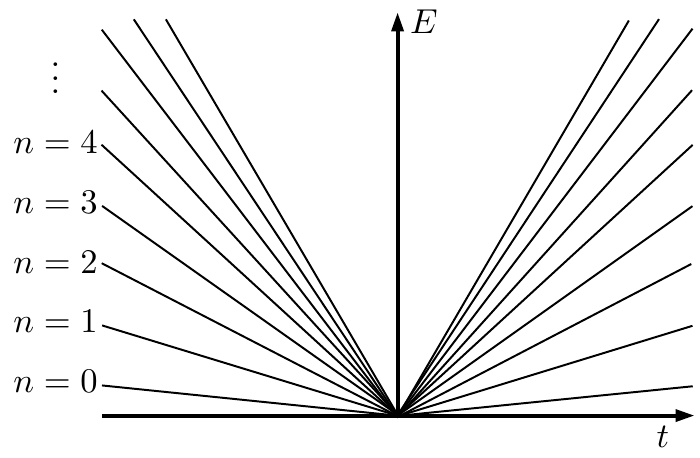}
\caption{The collapse of the spectrum of $H_t$. We plot the first few eigenvalues as a function of the parameter $t$.}
\label{fig:levels}
\end{figure}

It is convenient to introduce the annihilation operator
\begin{equation}\label{Classical a}
a = \frac{1}{\sqrt{2\omega}}(\omega x +\iu p),
\end{equation}
for some fixed $\omega>0$ independent of time. Eventually, we will be interested in the evolution over a symmetric time interval $[-t,t]$ and take $\omega = \omega_t = \omega_{-t}$. Then $a\psi_0=0$ characterizes the vacuum $\psi_0$ of both Hamiltonians $H_{-t}=H_t$.

A {\it squeezed state} $\psi$ is a state satisfying 
\begin{equation}\label{SqSt}
(\lambda a+\mu a^*)\psi=0
\end{equation}
for some $\lambda, \mu\in\bbC$ with $(\lambda, \mu)\neq(0,0)$. The
state $\psi$ exists iff $|\mu|<|\lambda|$, in which case it is
determined up to a phase by $\lambda, \mu$. In fact, the definition is
restated by (\ref{Classical a}) as the ordinary differential equation
$(\lambda-\mu)\psi'(x)=-\omega (\lambda+\mu)x\psi(x)$.
Except for $\lambda=\mu$ it has a solution, which is
\begin{equation*}
\psi(x)=\ep{-\omega\frac{\lambda+\mu}{\lambda-\mu}\frac{x^2}{2}}
\end{equation*}
up to multiples; it is in $L^2(\bbR_x)$ iff the quotient has
positive real part. The latter is equivalent to the stated condition $|\mu|<|\lambda|$.

Given that $|\mu/\lambda|<1$, we can write
\begin{equation}\label{quot}
\frac{\mu}{\lambda}=\ep{\iu\theta}\tanh r
\end{equation}
for some $r\ge 0$, $\theta\in [0,2\pi)$. We conclude that squeezed states are parameterized by a complex number $\tau=r\ep{\iu\theta}$. The ground state, $a\psi_0=0$, corresponds to $\tau=0$. Other eigenstates $\psi_n, (n=1,\ldots)$ of the excitation number operator $a^*a$ are not squeezed states.

Our main result is the following description of the evolution.
\begin{thm}\label{THM}
Let the system be in its ground state at time $-t$; then at time $t$ it is in a squeezed state $\psi$ of parameter $r\ep{\iu \theta}$ with
\begin{equation*}
\theta = -\rate t^2 - \frac{\pi}{2}+ o(1),\qquad \tanh r = \frac{1}{\sqrt 2} + o(1),
\end{equation*}
as $t\to\infty$. In particular the probability of producing $n$ excitations, 
$p_n=|\langle \psi_n, \psi\rangle |^2$, ($n=0,1,\ldots$), is
\begin{equation*}
p_{2k}(+\infty)=\frac{1}{2^k\sqrt{2}}\cdot\frac{(2k-1)!!}{(2k)!!},\qquad
p_{2k+1}(+\infty)=0.
\end{equation*}
\end{thm}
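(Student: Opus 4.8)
The plan is to exploit that $H_t$ is quadratic, so that the Heisenberg evolution acts on $a,a^*$ by a Bogoliubov (linear, canonical) transformation and therefore carries the vacuum to a squeezed state of the type (\ref{SqSt}). Write $U=U(t,-t)$ for the propagator from $-t$ to $t$. Since $a\psi_0=0$, the evolved state $\psi=U\psi_0$ is annihilated by $UaU^*$, which by linearity equals $\lambda a+\mu a^*$ for suitable $\lambda,\mu$; thus $\psi$ is the squeezed state (\ref{SqSt}) and the whole problem reduces to computing the single ratio $\mu/\lambda$. To this end I would introduce the classical mode function $u(s)$ solving $\ddot u+\omega_s^2 u=0$ with the positive-frequency initial data $u(-t)=(2\omega)^{-1/2}$, $\dot u(-t)=-\iu(\omega/2)^{1/2}$, where $\omega=\omega_t=\omega_{-t}$; this is exactly the data making $u$ the vacuum mode of $H_{-t}$. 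Propagating the Heisenberg equations $\dot x=p$, $\dot p=-\omega_s^2 x$ and re-expressing $a$ through (\ref{Classical a}) then yields the closed formula
\[
\frac{\mu}{\lambda}=-\,\frac{\omega\,\overline{u(t)}+\iu\,\overline{\dot u(t)}}{\omega\,\overline{u(t)}-\iu\,\overline{\dot u(t)}}.
\]

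The core of the argument is the asymptotic evaluation of this ratio as $t\to\infty$. After the rescaling $z=\sqrt{2\rate}\,s$ the equation $\ddot u+\rate^2 s^2 u=0$ becomes Weber's equation $u''+\tfrac14 z^2 u=0$, i.e.\ the parabolic cylinder equation with vanishing parameter. Equivalently, $u$ is a zero-energy scattering solution for the inverted parabola $V(s)=-\rate^2 s^2$, and since $V(0)=0$ the system sits exactly at the top of the barrier. I would write the two WKB modes $(2\omega_s)^{-1/2}\ep{\mp\iu\Phi(s)}$, with $\Phi(s)=\int_0^s\omega_{s'}\,\d s'$, valid for large $|s|$, and encode the passage through the degeneracy at $s=0$ by a constant connection matrix $\caC$ relating the amplitudes at $s\to-\infty$ to those at $s\to+\infty$. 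The positive-frequency initial data forces the incoming amplitudes to be $(\ep{-\iu\Phi(-t)},0)$, and substituting the resulting outgoing amplitudes into the displayed formula gives $\mu/\lambda=-(\overline{\caC_{21}}/\overline{\caC_{11}})\,\ep{-\iu\rate t^2}$, using $2\Phi(t)=\rate t^2$.

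It then remains to read off $\caC$ from the standard connection formulas for parabolic cylinder functions at parameter zero. At the barrier top these are explicit and give $|\caC_{21}/\caC_{11}|=1/\sqrt2$ and $\arg(-\overline{\caC_{21}}/\overline{\caC_{11}})=-\pi/2$, whence $\tanh r=1/\sqrt2+o(1)$ and $\theta=-\rate t^2-\pi/2+o(1)$. I expect this step to be the main obstacle: everything else is algebra, but controlling the $o(1)$ errors requires care, both in the WKB matching across the turning point and in tracking the constant Stokes phase that produces the $-\pi/2$. The rate independence is transparent here, since $\rate$ enters only through the rescaling $z=\sqrt{2\rate}\,s$ and leaves the Weber parameter, hence $\caC$, untouched; physically, changing $\rate$ merely rescales time while keeping the scattering energy pinned to the barrier top.

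Finally, the excitation probabilities follow from the squeezed-state equation directly. Expanding $\psi=\sum_n c_n\psi_n$ and applying $\lambda a+\mu a^*$ gives the two-step recursion $\lambda\sqrt{m+1}\,c_{m+1}+\mu\sqrt{m}\,c_{m-1}=0$, so that $c_1=0$ forces all odd coefficients to vanish and $|c_{2k}|^2=\tanh^{2k}r\,\frac{(2k-1)!!}{(2k)!!}\,|c_0|^2$. Normalization via $\sum_k\frac{(2k-1)!!}{(2k)!!}x^k=(1-x)^{-1/2}$ with $x=\tanh^2 r$ gives $|c_0|^2=1/\cosh r$, and inserting $\tanh r=1/\sqrt2$ (so $\cosh r=\sqrt2$) reproduces $p_{2k}=2^{-k}2^{-1/2}\frac{(2k-1)!!}{(2k)!!}$ and $p_{2k+1}=0$, as claimed.
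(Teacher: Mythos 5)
Your proposal is correct, and its backbone is the same as the paper's: since $H_t$ is quadratic, the Heisenberg dynamics is the classical flow of Weber's equation acting as a Bogoliubov transformation, the evolved vacuum is the squeezed state with ratio $\mu/\lambda=-\overline V/\overline U$, and everything reduces to the large-$t$ asymptotics of that classical flow; your mode-function expression for $\mu/\lambda$ is precisely this ratio. Where you differ is in packaging and in the last step. For the asymptotics, the paper works with the even/odd solutions $x_\pm$ normalized by the Wronskian and quotes the Abramowitz--Stegun asymptotics of $M(a,b,z)$ to get $U=-\sqrt2\,\iu\,\ep{-\iu(t^2-\pi/2)}+o(1)$, $\overline V=\iu+o(1)$, whereas you phrase the crossing as zero-energy over-barrier scattering and invoke the connection matrix $\caC$ of the parameter-zero parabolic cylinder equation. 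These are equivalent: translating the paper's $x_\pm$ asymptotics into your WKB basis gives (up to conventions) $\caC_{11}=\sqrt2$, $\caC_{21}=-\iu$, which confirms your claimed values $|\caC_{21}/\caC_{11}|=1/\sqrt2$ and $\arg(-\overline{\caC_{21}}/\overline{\caC_{11}})=-\pi/2$; note also that the incoming amplitude cancels in $\mu/\lambda$, so your sign convention for $\Phi(-t)$ is immaterial. In both treatments the Stokes data is ultimately quoted from the special-function literature rather than derived, so you are no worse off there --- and, importantly, you correctly take it from exact parabolic cylinder theory rather than from naive turning-point matching, which would fail at this degenerate (coalescing turning point) crossing. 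The genuinely different piece is the excitation probabilities: the paper proves Lemma~\ref{lemma:U squeezed}, an ODE-in-$r$ computation of the full number-basis expansion of the squeezed vacuum, while you obtain $p_{2k}$ more elementarily from the two-term recursion forced by $(\lambda a+\mu a^*)\psi=0$ together with $\sum_k \frac{(2k-1)!!}{(2k)!!}x^k=(1-x)^{-1/2}$. Your route suffices for the theorem and is shorter; the paper's lemma buys more (the relative phases and the explicit unitary implementation $\caU_\tau$ of the squeeze), which it uses elsewhere.
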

The squeezing is meant with respect to the ground state $\psi_0$ of $H_t = H_{-t}$. The probability of returning to it, or fidelity, is $p_0=1/\sqrt{2}$. \\
 
We shall present the main ideas of the proof shortly, leaving details to later sections. Before doing so we make three remarks. The second one, which is about self-similarity, is formulated as a lemma, since it will be used in the proof of the theorem. The other two are heuristic.\\

\noindent\textbf{Remark.} The evolution is manifestly non-adiabatic near $t=0$, where all gaps close. A general but rough criterion for the adiabatic regime is $\|dP/dt\|\ll \Delta$, where $P$ is the projection onto the eigenstate under consideration and $\Delta$ is the gap separating it from the remaining ones. Clearly $\Delta=\rate t$ here, whereas the ground state $P$ depends on $t$ through a dilation $x\mapsto\gamma x$ by $\gamma\propto(\alpha |t|)^\beta$; indeed the same applies to the Hamiltonian (\ref{Hamiltonian}), up to an overall factor which however does not affect $P$. There is no need to compute the exponent $\beta=-1/2$. In fact dilations form a one-parameter group if parameterized additively. Thus $\|dP/dt\|=O(d\log\gamma/dt)=O(t^{-1})$. The adiabatic criterion so amounts to $\rate t^2\gg 1$.

\begin{lemma}\label{self}
The special case $\rate =1$ suffices.
\end{lemma}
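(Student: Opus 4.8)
The plan is to exhibit an exact scaling symmetry that maps the dynamics at acceleration $\rate$ onto the dynamics at acceleration $1$, so that the two solutions differ only by a time reparametrization and a fixed (time-independent) dilation of $L^2(\bbR_x)$. Since the Hamiltonian is quadratic, dilations act as a one-parameter group of unitaries $(U_\sigma\psi)(x)=\sigma^{1/2}\psi(\sigma x)$ with $U_\sigma x U_\sigma^{-1}=\sigma x$ and $U_\sigma p U_\sigma^{-1}=\sigma^{-1}p$. Conjugating $H^\rate_t=\frac12(p^2+\rate^2 t^2 x^2)$ and rescaling time by $t=\kappa s$ produces $\kappa\,U_\sigma H^\rate_{\kappa s}U_\sigma^{-1}=\frac12(\kappa\sigma^{-2}p^2+\rate^2\kappa^3 s^2\sigma^2 x^2)$.

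First I would fix the two free constants by demanding that this equal the $\rate=1$ Hamiltonian $\frac12(p^2+s^2 x^2)$: matching the $p^2$ coefficient gives $\sigma^2=\kappa$, and matching the $x^2$ coefficient then gives $\rate^2\kappa^4=1$, that is $\kappa=\rate^{-1/2}$ and $\sigma=\rate^{-1/4}$. Because $\sigma$ is independent of time, $U_\sigma$ commutes with $\d/\d t$ and no anomalous generator (no $\frac12(xp+px)$ term) appears; hence $\phi(s):=U_\sigma\psi(\kappa s)$ solves the Schr\"odinger equation with $\rate=1$ whenever $\psi$ solves it with acceleration $\rate$. In particular the solution started in the vacuum at time $-t$ is carried to the solution started in the vacuum at time $-s$, with $s=\rate^{1/2}t$.

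Next I would check that the squeezing data are invariant under this map, not merely the state. The reference frequency entering the annihilation operator (\ref{Classical a}) is $\omega=\omega_t=\rate t$ at acceleration $\rate$ and $\omega'=s$ at acceleration $1$; since $\omega'/\omega=\rate^{1/2}t/(\rate t)=\rate^{-1/2}=\sigma^2$, a short computation gives $U_\sigma a\, U_\sigma^{-1}=a'$ exactly, with $a'$ the $\rate=1$ annihilation operator. Consequently $U_\sigma$ sends the vacuum to the vacuum and a squeezed state defined by $(\lambda a+\mu a\str)\psi=0$ to the squeezed state defined by $(\lambda a'+\mu (a')\str)(U_\sigma\psi)=0$ with the same $(\lambda,\mu)$, hence the same parameter $\tau=r\ep{\iu\theta}$ of (\ref{quot}); likewise the overlaps $p_n=|\langle\psi_n,\psi\rangle|^2$ are preserved. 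Therefore $r_\rate(t)=r_1(s)$, $\theta_\rate(t)=\theta_1(s)$ and $p_n^\rate(t)=p_n^1(s)$ with $s=\rate^{1/2}t$. As $t\to\infty$ we have $s\to\infty$ and $s^2=\rate t^2$, so the asymptotics of Theorem~\ref{THM} for general $\rate$ follow term by term from the case $\rate=1$; in particular the $\rate$-dependent phase $-\rate t^2$ is exactly $-s^2$, while $\tanh r$ and the $p_n$ come out $\rate$-independent.

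The only genuinely delicate point is the covariance of the squeezing parameter: one must ensure that the reference frequency $\omega$ is itself transported correctly, so that $\tau$ is computed relative to the image vacuum rather than a fixed one. This is precisely what the identity $\omega'/\omega=\sigma^2$ guarantees, and it is what makes $U_\sigma a\, U_\sigma^{-1}=a'$ hold on the nose. Everything else is a routine verification that a time-independent symplectic rescaling intertwines the two quadratic evolutions.
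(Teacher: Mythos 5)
Your proof is correct and takes essentially the same route as the paper: the identical space--time rescaling $x=\rate^{-1/4}x'$, $t=\rate^{-1/2}t'$ (your $\sigma=\rate^{-1/4}$, $\kappa=\rate^{-1/2}$), here implemented as conjugation by a unitary dilation together with a time reparametrization. Your explicit verification that $U_\sigma a\,U_\sigma^{-1}=a'$, so that the squeezing parameters and occupation probabilities relative to the instantaneous reference frequencies coincide exactly at corresponding times, spells out a covariance step the paper leaves implicit in its final sentence, but it is the same argument rather than a different one.
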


\begin{proof} The dynamics
$$
\iu\frac{\dif\psi(t)}{\dif t} = \frac{1}{2}\bigl(p^2 + \rate^2t^2 x^2\bigr) \psi(t),\qquad (\rate > 0) ,
$$
can be rescaled in space and time, $x = \lambda x'$, $t = \mu t'$. Then for $\mu^{-1} = \lambda^{-2}$, $\mu^{-1} = \rate^2\mu^2\lambda^2$, \ie for $\lambda = \mu^{1/2}, \mu = \vert \rate \vert^{-1/2}$, the state $\psi'(t',x'):= \psi(t,x)$ solves the Schr\"odinger equation for the Hamiltonian~(\ref{Hamiltonian}) with $\rate =1$. In particular the phase slip $-\pi/2$, the squeezing $r$, and the probabilities are independent of $\rate$, asymptotically in $t\to \infty$. 
\end{proof}

\noindent\textbf{Remark.} The contribution $-t^2$ to $\theta$, which reflects the WKB approximation, can be understood as follows. During early times $t\ll -t_0$, ($t_0\approx 1$) the dynamics is adiabatic, meaning that the evolved state closely shadows the instantaneous ground state. In particular there is no substantial squeezing of the former with respect to the latter. In between $-t_0$ and $+t_0$ the state makes a non-adiabatic transition to a squeezed state. Thereafter the axis of squeezing of that state rotates clockwise by an angle
\begin{equation*}
-\rtn=\int_0^t\omega_{t'}\Idif t' = \frac{t^2}{2}.
\end{equation*}
Let $\mathcal{U}_\rtn$ be the unitary transformation corresponding to a rotation in phase space by $\rtn$, whence $\mathcal{U}_\rtn^*a\mathcal{U}_\rtn=\ep{\iu\rtn}a$. For a squeezed state $\psi$ of parameter $\mu/\lambda = e^{i\theta}\tanh r $, the state $\psi_\rtn=\mathcal{U}_\rtn\psi$ then satisfies $(\lambda\ep{-\iu\rtn} a+\mu\ep{\iu\rtn} a^*)\psi_\rtn=0$ and is thus a squeezed state of parameter $\ep{\iu(\theta+2\rtn)}\tanh r$, in agreement with the said contribution. \\

\begin{proof}[Proof of Theorem~\ref{THM}] We interpret $(x,p)\in\mathbb{R}^2$ as phase space coordinates. Complex coordinates $a$ and $\bar a$ are defined by (\ref{Classical a}) and its complex conjugate. A linear symplectic map is then represented as a matrix
\begin{equation}
\label{ClassicalFlow}
\varphi:= \begin{pmatrix}
U & \overline{V} \\ V & \overline U
\end{pmatrix} \in \mathrm{SU}(1,1) , \qquad \begin{pmatrix}	
a  \\ \overline{a}\end{pmatrix} \mapsto \varphi \begin{pmatrix}
									a \\ \overline{a}
										\end{pmatrix}.
\end{equation}
A matrix of that form belongs to the group $\mathrm{SU}(1,1)$ if the condition $|U|^2 - |V|^2 =1$ holds true. The form and the condition state that the map is compatible with complex conjugation, respectively that it leaves invariant
the sesquilinear form  $\bar a_2a_1-a_2 \bar a_1=\iu(x_2p_1-p_2x_1)$ associated to the phase space volume.

The classical equations of motion associated to (\ref{Hamiltonian}) 
\begin{equation}
\label{ClassEq}
\dot x(t) = p(t),\qquad \dot p(t) = -t^2 x(t),
\end{equation}
generate a linear Hamiltonian flow, which for any $t$ is an example for a $\varphi$ as in (\ref{ClassicalFlow}). As we shall see at the end of Sect.~\ref{sec:flow} the propagator for the interval $[-t,t]$  is given by
\begin{equation}\label{ClassicalSol}
\begin{aligned}
U &= - \sqrt{2} \iu \ep{-\iu (t^2 - \frac{\pi}{2})}+ o(1), \\
\overline{V} &= \iu + o(1),
\end{aligned}
\end{equation}
as $t\to+\infty$. 

The quantum evolution for the same interval is given by the propagator $\mathcal{U}$ in the Schr\"odinger picture; then in the Heisenberg picture by $A\mapsto \caU\str A \caU$. Since the Heisenberg equations of motion are formally identical with the canonical equations of motion (up to $a\str$ replacing $\bar a$) and moreover linear, the time evolution of $a$ and $a^*$ is given by the Bogoliubov transformation (\ref{ClassicalFlow}), \ie
\begin{equation*}
\mathcal{U} \str a \mathcal{U} = U a + \overline{V} a^*,
\end{equation*}
and its hermitian conjugate. The classical flow thus completely determines the quantum evolution. 

The final state $\caU\psi_0$ satisfies
\begin{equation*}
\caU\str(\lambda a+\mu a^*)\caU\psi_0=(\lambda\overline{V}+\mu\overline{U}) a^*\psi_0
\end{equation*}
by $a\psi_0=0$. It thus is a squeezed state for parameters (\ref{quot}) making the r.h.s. vanish:
\begin{equation}\label{Csqst}
\ep{\iu\theta}\tanh r=\frac{\mu}{\lambda}= -\frac{\overline V}{\overline U}. 
\end{equation}
Comparison with (\ref{ClassicalSol}) concludes the proof of the first claim. The transition probabilities will be computed following Lemma~\ref{lemma:U squeezed}.
\end{proof}

The squeezed state (\ref{SqSt}) is conveniently parameterized by the complex coordinate $z = \ep{\iu \theta} \tanh r$ seen in (\ref{quot}) and denoted $\ket{\psi(z)}$. The family of squeezed states endows the disk $\vert z \vert <1$ with natural metric and curvature tensors (see Section~\ref{sec:geom}).
\begin{prop}\label{sqpr}
Let $P_z = \ket{\psi(z)} \bra{\psi(z)}$ be a family of squeezed projections. Then the associated Fubini-Study metric and adiabatic curvature are given by
$$
g = \frac{1}{(1 -|z|^2)^2} |\d z|^2, \qquad \omega = \frac{1}{(1- |z|^2)^2}  \frac{\iu}{2} \d z\wedge \d \bar{z} .
$$
The unit complex disk endowed with $g$, $\omega$ realizes the Poincar\'{e} disk model; in particular it is a K\"{a}hler manifold.
\end{prop}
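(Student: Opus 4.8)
The plan is to exploit that the squeezed states form a \emph{holomorphic} family in $z$, which collapses the computation to a single K\"ahler potential. Substituting $z=\mu/\lambda$ into the Gaussian solution displayed before~(\ref{quot}) gives the unnormalized wavefunction
\begin{equation*}
\tilde\psi(z;x)=\ep{-\frac{\omega}{2}\frac{1+z}{1-z}x^2},
\end{equation*}
and since $w(z):=\frac{1+z}{1-z}$ is holomorphic on $|z|<1$ with $\Re w(z)=\frac{1-|z|^2}{|1-z|^2}>0$ there, the assignment $z\mapsto\ket{\tilde\psi(z)}$ is a holomorphic section of the line bundle of squeezed states, with $\ket{\psi(z)}=\ket{\tilde\psi(z)}/\caN(z,\bar z)^{1/2}$ where $\caN=\langle\tilde\psi(z),\tilde\psi(z)\rangle$. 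The advantage of working with $\tilde\psi$ is that for a holomorphic family both the Fubini--Study metric and the adiabatic curvature are generated by the single real potential $K=\log\caN$.

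First I would evaluate $\caN$ by a Gaussian integral. From $|\tilde\psi(z;x)|^2=\ep{-\omega(\Re w)x^2}$ one gets
\begin{equation*}
\caN(z,\bar z)=\sqrt{\frac{\pi}{\omega\,\Re w}}=\sqrt{\frac{\pi}{\omega}}\;\frac{|1-z|}{(1-|z|^2)^{1/2}},
\end{equation*}
so that, discarding the constant, $K=\log\caN=\tfrac12\log|1-z|^2-\tfrac12\log(1-|z|^2)$. Next I would record the geometry of a holomorphic family: using $\partial_{\bar z}\ket{\tilde\psi}=0$ and $\langle\tilde\psi,\partial_z\tilde\psi\rangle=\partial_z\caN$, the Berry connection $A=\langle\psi,\d\psi\rangle$ is found to be $A=\tfrac12(\partial_zK\,\d z-\partial_{\bar z}K\,\d\bar z)$, whence (up to the standard sign convention) the adiabatic curvature is $\omega=\iu\,\partial\bar\partial K=\iu\,\partial_z\partial_{\bar z}K\,\d z\wedge\d\bar z$ and the Fubini--Study metric is $g=2\,\partial_z\partial_{\bar z}K\,|\d z|^2$, these being the imaginary and real parts of the quantum geometric tensor.

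The final step is an elementary differentiation. The term $\log|1-z|^2=\log(1-z)+\log(1-\bar z)$ is pluriharmonic, hence annihilated by $\partial_z\partial_{\bar z}$, while $\partial_z\partial_{\bar z}\log(1-|z|^2)=-(1-|z|^2)^{-2}$; therefore $\partial_z\partial_{\bar z}K=\tfrac12(1-|z|^2)^{-2}$, and substitution yields exactly the stated $g$ and $\omega$. The resulting $g=(1-|z|^2)^{-2}|\d z|^2$ is, up to the overall normalization, the hyperbolic metric of constant negative curvature on the unit disk, i.e. the Poincar\'e disk model; being of the form $\iu\,\partial\bar\partial K$ it is closed, so the disk is K\"ahler (automatic in complex dimension one).

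I expect the only real subtleties to be bookkeeping rather than analysis: the crux is recognizing the holomorphy of the family, which legitimizes the K\"ahler-potential shortcut and spares one the direct evaluation of $\langle\partial_z\psi,\partial_{\bar z}\psi\rangle$; the remaining care concerns the normalization conventions (the factors of $2$ and the sign of the curvature) relating $K$ to $g$ and $\omega$. The genuine computational content is just one Gaussian integral and the identity $\partial_z\partial_{\bar z}\log(1-|z|^2)=-(1-|z|^2)^{-2}$.
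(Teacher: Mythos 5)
Your argument is correct and lands exactly on the paper's normalizations, but it is implemented differently from the paper's proof. The paper deduces Proposition~\ref{sqpr} as the $N=1$ specialization of Theorem~\ref{Ngeom}: holomorphy of the Fock-space family $Z\mapsto \ep{-\frac{1}{2}(a\str,Za\str)}\psi_0$ is used to show, directly from the definitions $\caA=P_\perp\d P$ and $h=2\tr(\caA\otimes\caA\str)$, that $h=2\,\partial\otimes\bar\partial\log\No^2$, and the potential $\No^2=\det(1-ZZ\str)^{-1/2}$ is obtained by a Fock-space computation (Takagi factorization, eq.~(\ref{psiNormalization})). You run the same K\"ahler-potential mechanism, but entirely within the $N=1$ position representation: the unnormalized squeezed state is the explicit Gaussian $\ep{-\frac{\omega}{2}\frac{1+z}{1-z}x^2}$, holomorphy in $z$ is manifest, and the potential comes from a single one-dimensional Gaussian integral. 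Your holomorphic section differs from the paper's by a nonvanishing holomorphic factor, which is why your potential carries the extra pluriharmonic term $\frac{1}{2}\log|1-z|^2$; you correctly note that $\partial\bar\partial$ kills it, so the discrepancy is immaterial (the potential is only defined modulo pluriharmonic functions) --- a nice consistency check between the position and Fock pictures. What your route buys is elementariness: no Bogoliubov or Fock machinery, just one integral and one differentiation. What the paper's route buys is generality (the $N$-mode Theorem~\ref{Ngeom}, wanted in its own right) and, more substantively, a proof of the one step you only cite: you assert as standard that for a holomorphic family both $g$ and $\omega$ are generated by $K=\log\caN$, and your Berry-connection sketch accounts only for the curvature half, not for the metric as defined through $h=2\tr(\caA\otimes\caA\str)$ and $h=g-\iu\omega$. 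That identity is indeed standard and is precisely what the chain of equalities $\tr(\caA\otimes\caA\str)=\partial\otimes\bar\partial\log\No^2$ in the proof of Theorem~\ref{Ngeom} establishes (using only $P_\perp\ket{\psi}=0$ and $P_\perp\bar\partial\ket{\psi}=0$), so this is a completeness issue rather than a gap; reproducing that three-line computation would make your proof fully self-contained.
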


The proposition is corollary of Proposition~\ref{Ngeom} which describes the metric and curvature tensors associated to $N$-mode squeezed states. The metric and curvature tensors obtained are the standard tensors associated to the Poincar\'{e} disk model, up to a constant prefactor.

The time dependent squeezed state described in Theorem~\ref{THM} corresponds to a trajectory on the Poincar\'{e} disk. We plot this trajectory in Figure~\ref{fig:SqueezingPlot}. The trajectory hits the boundary of the disk at the critical time $t=0$ and then spirals into its final point described in Theorem~\ref{THM}.

\begin{figure}[htb]
\centering
\begin{subfigure}{.55\textwidth}
\includegraphics[width = 0.9\textwidth]{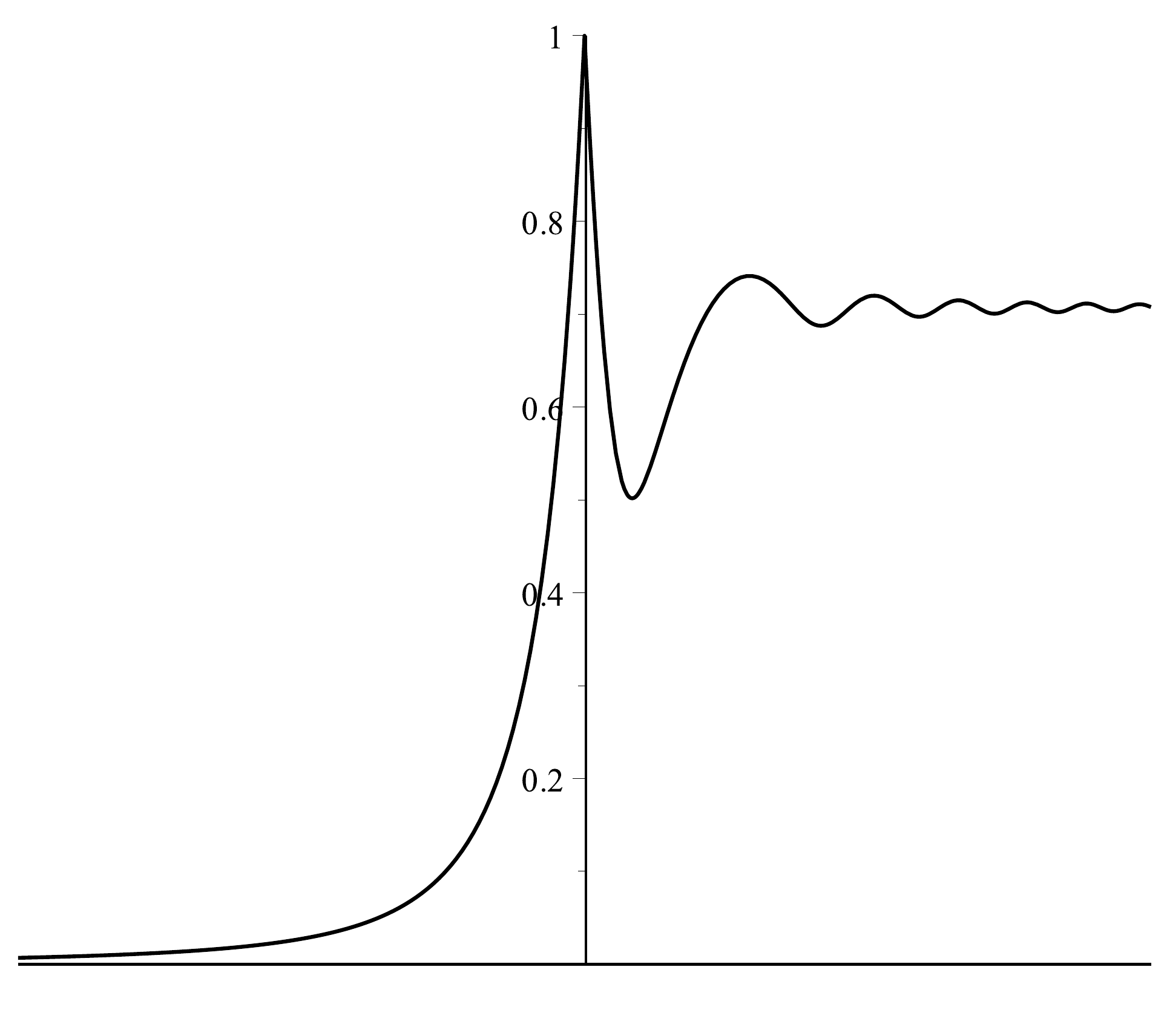}
\caption{}
\end{subfigure}%
\begin{subfigure}{.45\textwidth}
\includegraphics[width = 0.9\textwidth]{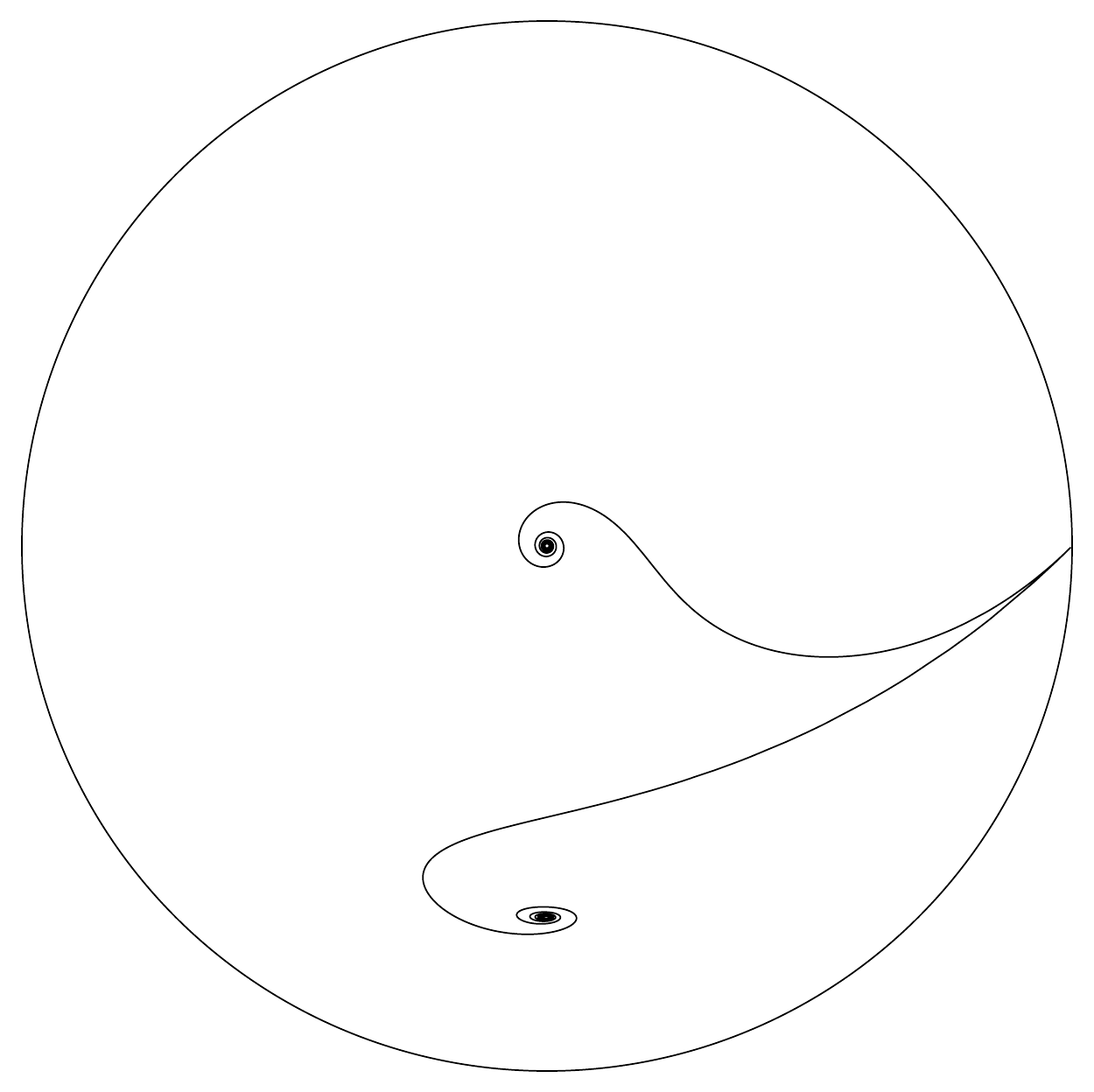}
\caption{}
\end{subfigure}
\caption{We plot the squeezing parameter of the evolved state considered in Theorem~\ref{THM} as a function of time $t$ and with respect to the instantaneous ground state. (a) The modulus $\tanh r$, illustrating the breakdown of adiabaticity near the collapse $t=0$, and the transition from no squeezing when $t \ll -1$ to a squeezing of $\tanh r = 1/\sqrt{2}$ for $t \gg 1$. (b) The trajectory in the Poincar\'{e} disk after subtracting the $-t^2$ contribution to the phase, \ie we plot $w:=\ep{\iu t^2} z$ with asymptotic values $w=0$ and $w=-\iu/\sqrt{2}$.}
\label{fig:SqueezingPlot}
\end{figure}

Before deriving the asymptotics~(\ref{ClassicalSol}) of the solution of the classical flow, we describe in the next section the general mathematical structure of driven quadratic systems, thus providing the framework for the theorem above. We recall in particular Bogoliubov transformations and quasi-free states for the case of $N$ modes. We also discuss squeezed states in more details and derive their occupation numbers. Section~\ref{sec:flow} then provides the only missing piece in the proof of Theorem~\ref{THM}, namely the solution of the classical flow. In Section~\ref{sec:Extensions} we discuss two extensions of Theorem~\ref{THM}. In the first extension we consider non-symmetric time intervals, while in the second one we consider a two parameter frequency profile $\omega_t = \sqrt{(\alpha t)^2 + g^2}$ which includes an avoided crossing. We conclude with a summary in Section~\ref{sec:Conclusions}.



\section{Bogoliubov transformations and squeezed states}
\label{sec:Bog}

The driven one-dimensional harmonic oscillator is a special case of a qua\-dra\-tic Hamiltonian with time-dependent coefficients. The Heisenberg evolution still reduces to a classical dynamics, namely to a family of Bogoliubov transformations. When viewed in the Schr\"odinger picture, squeezed states are mapped to just such.
\subsection{An algebraic view on squeezed states}
\label{sec:Alg}
Bogoliubov transformations are matrices
\begin{equation}\label{genBog}
\varphi := \begin{pmatrix}
U & \overline{V} \\ V & \overline U
\end{pmatrix} \in \mathrm{SU}(N,N)
\end{equation}
that act as automorphisms on the CCR-algebra generated by $a_i,a_i\str$, $(i=1,\ldots, N)$ through
\begin{equation}\label{Classical&QuantumBog}
a_i\mapsto \sum_{j=1}^n U_{ij} a_j + \overline{V_{ij}} a_j\str
\end{equation}
and the hermitian conjugate thereof. By introducing
\begin{equation*}
a(f):= \sum_{i=1}^N a_i \overline{f_i},\qquad (f\in\bbC^N)
\end{equation*}
it equivalently acts as
\begin{equation}\label{aBog}
a(f)\mapsto a(U\str f) + a\str(V\str\overline f)
\end{equation}
and its conjugate. In terms of the self-dual annihilation operators 
\begin{equation}\label{selfdualA}
A(\xi) := a(f) + a\str(Cg), \qquad (\xi = f\oplus g)
\end{equation}
with $C:g\mapsto\overline{g}$ denoting complex conjugation, it reads
\begin{equation}
\label{selfdualBog}
A(\xi)\mapsto A(\varphi\str\xi),
\end{equation}
where
\begin{equation}
\label{stared}
\varphi\str = \begin{pmatrix}
U\str & V\str \\ V^t & U^t
\end{pmatrix}.
\end{equation}
In fact, eq.~(\ref{selfdualBog}) states for $\xi = f\oplus g$ that
\begin{equation*}
a(f) + a\str(Cg) \mapsto a(U\str f + V\str g) + a\str(C(V^t f + U^t g))
\end{equation*}
which for $g=0$ agrees with~(\ref{aBog}) and for $f=0$ with the conjugate thereof.

It is convenient to introduce the matrices 
\begin{equation*}
J = \begin{pmatrix} 0 & C \\ C & 0 \end{pmatrix},\qquad S = \begin{pmatrix} 1 & 0 \\ 0 & -1\end{pmatrix}
\end{equation*}
satisfying $JS+SJ=0$. The properties of the operators $A(\xi)$ are then stated as
\begin{align}
&\xi\mapsto A(\xi)\text{ is antilinear},\nonumber \\
&[A(\xi),A\str(\zeta)] = \langle \xi, S\zeta\rangle,\label{cr} \\
&A(J\xi) = A\str(\xi),\nonumber
\end{align}
and a \emph{Bogoliubov transformation} is an invertible map $\varphi$ on $\bbC^N\oplus\bbC^N$ such that
\begin{equation*}
\varphi S \varphi\str = S, \qquad J \varphi\str = \varphi\str J.
\end{equation*}
To avoid any confusion we remark that all scalar products are denoted $\langle \cdot, \cdot \rangle$, regardless of the vector space (e.g. $\bbC^N$ or $\bbC^N\oplus\bbC^N$), and that they are linear in the second factor.

\emph{Quasi-free states} are states on the self-dual algebra specified by 
\begin{equation}
\label{eq:sd-state}
\omega_P(A(\xi)A\str(\zeta)) = \langle \xi, S P\zeta\rangle,
\end{equation}
and by Wick's rule, where the matrix $P$ is such that
\begin{align}
& \langle P \xi, S\zeta\rangle = \langle \xi, SP\zeta\rangle,\label{eqs1}\\
&\langle \zeta, S P\zeta\rangle \ge 0,\label{eqs2} \\
&P + JPJ = 1.\label{eqs3}
\end{align}
Indeed, the conditions reflect $\omega_P(A\str)=\overline{\omega_P(A)}$, $\omega_P(AA\str)\ge 0$, and (\ref{cr}). The first one states $P\str S=SP$ and is a prerequisite for the second by $(SP)\str=SP$. The inequality is strict if $P\zeta\neq 0$. The last condition also reads $PJ=J(1-P)$.

In this setting, a \emph{squeezed state} is a quasi-free state which is indecomposable among general states. 

Gauge invariant quasi-free states are defined on the usual CCR-algebra by
\begin{equation}\label{GaugeQFS0}
\omega_\rho(a(g) a(f)) =0, \qquad\omega_\rho(a\str(g) a(f)) = \langle f,\rho g\rangle
\end{equation}
for some matrix $\rho=\rho^*$ and correspond to the special case 
\begin{equation}\label{GaugeQFS}
P_\rho = \begin{pmatrix} 1+\rho & 0 \\ 0 & -C\rho C \end{pmatrix}
\end{equation}
with $\rho\geq 0$ by (\ref{eqs2}); conditions (\ref{eqs1}, \ref{eqs3}) are satisfied identically. In particular, $P_\rho$ is a projection, $P_\rho=P_\rho^2$, iff $\rho = 0$.

A Bogoliubov transformation $\varphi$ induces one on arbitrary states and in particular on quasi-free states, $\omega_P\mapsto\omega_{\tilde P}$ by
\begin{equation}\label{omegaBog}
 \omega_{\tilde P}(A(\xi)A\str(\zeta))=
\omega_P(A(\varphi\str\xi)A\str(\varphi\str\zeta)), 
\end{equation}
\ie $P\mapsto \tilde P$ with
\begin{equation}\label{PBog}
\tilde P = (\varphi\str)^{-1}P \varphi\str,
\end{equation}
because (\ref{omegaBog}) equals
\begin{equation*}
\langle \varphi\str\xi, SP\varphi\str\zeta\rangle = \langle \xi, \varphi SP\varphi\str\zeta\rangle =\langle \xi, S(\varphi\str)^{-1 }P\varphi\str\zeta\rangle.
\end{equation*}
We note by the way that including the `metric tensor' $S$ in the definition (\ref{eq:sd-state}) of quasi-free states is the reason that Bogoliubov transformations $\varphi$ act as similarity transformations (\ref{PBog}) on $P$. For instance $P$ is a projection iff $\tilde P$ is.

An application to dynamics is as follows. Let $\mathcal{U}:\mathcal{F}\mapsto\mathcal{F}$ be a propagator on Fock space $\mathcal F$ in the Schr\"odinger picture generated by a possibly time-dependent Hamiltonian that is quadratic in $\{a_i,a_i\str:i=1,\ldots,N\}$. Its classical counterpart, which is a function of the complex variables $a_i$, $\overline{a_i}$, generates a propagator $\varphi\in \mathrm{SU}(N,N)$ in the sense of~(\ref{Classical&QuantumBog}). Again, since the Heisenberg equations of motion are formally identical with the canonical equations of motion and moreover linear, we have
\begin{equation*}
A_\mathcal{U}(\xi):= \mathcal{U}\str A(\xi) \mathcal{U} = A(\varphi\str\xi),
\end{equation*}
see~(\ref{selfdualBog}). The expectations in the (initial) quasi-free state $P$ are thus
\begin{equation*}
\omega_P(A_\mathcal{U}(\xi)A_\mathcal{U}\str(\zeta)) = \omega_{\tilde P}(A(\xi)A\str(\zeta))
\end{equation*}
with $\tilde P$ as in (\ref{PBog}). In particular $P\mapsto \tilde P$ is the propagator on quasi-free states in the Schr\"odinger picture; it does not rely on the Fock space representation of the self-dual algebra.

We now provide two further applications which are of independent interest. Firstly, we prove that any quasi-free state is gauge-invariant up to a suitable Bogoliubov transformation. Secondly, we show that a quasi-free state $\omega_P$ is a squeezed state if and only if $P=P^2$.
\begin{lemma}\label{lemma:diagonalization}
Let $P$ satisfy conditions~(\ref{eqs1}-\ref{eqs3}). Then there exists a Bogoliubov transformation $\varphi$ such that $\varphi\str P (\varphi\str)^{-1} $ has the form (\ref{GaugeQFS}) with $\rho$ diagonal.
\end{lemma}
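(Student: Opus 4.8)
The plan is to view $P$ as a self-adjoint operator on the indefinite inner-product space $(\bbC^N\oplus\bbC^N,\langle\cdot,S\cdot\rangle)$ and to diagonalize it by a single transformation that is simultaneously $S$-pseudo-unitary and $J$-commuting, since such a transformation is exactly an admissible $\varphi\str$. First I would restate the hypotheses in operator form: condition~(\ref{eqs1}) says that $Q:=SP$ is self-adjoint for the ordinary inner product, (\ref{eqs2}) says $Q\ge 0$, and (\ref{eqs3}) reads $JPJ=1-P$, so that the antiunitary $J$ intertwines $P$ with $1-P$. Moreover, a map $\varphi$ is a Bogoliubov transformation precisely when $\psi:=\varphi\str$ satisfies $\psi\str S\psi=S$ and $J\psi=\psi J$. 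It therefore suffices to produce an $S$-orthonormal eigenbasis of $P$ that is compatible with $J$ and to read off the associated change of basis.

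The main obstacle is that $P$ is self-adjoint only for the indefinite form $\langle\cdot,S\cdot\rangle$, and such operators need not be diagonalizable; the positivity (\ref{eqs2}) is what rescues the argument. Suppose $P$ had a Jordan chain $Pv=\lambda v$, $Pw=\lambda w+v$ with $v\neq 0$. Pairing against $S$ and using self-adjointness of $Q$ forces $\langle v,Sv\rangle=0$, whence $\langle v,Qv\rangle=\lambda\langle v,Sv\rangle=0$; since $Q\ge 0$ this gives $Qv=0$, i.e.\ $\lambda=0$. Thus nontrivial Jordan blocks can occur only at eigenvalue $0$. But by (\ref{eqs3}) the map $J$ carries a Jordan chain at $0$ to one at $1$, and the same argument applied at $\lambda=1\neq 0$ excludes chains there. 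Hence $P$ has no nontrivial Jordan blocks: it is diagonalizable with real spectrum.

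Next I would pin down the spectrum and the signs of $S$. For an eigenvector $Pv=\lambda v$ with $\lambda\neq 0$ one has $\langle v,Sv\rangle=\lambda^{-1}\langle v,Qv\rangle$, so $S$ is positive definite on each eigenspace with $\lambda>0$ and negative definite on each with $\lambda<0$; on $\ker P$ the sign is transported through $J$ via $\langle Jv,SJv\rangle=-\langle v,Sv\rangle$, which makes $S$ negative definite there as well. Distinct eigenspaces are automatically $S$-orthogonal. Combining these facts with the pairing $\lambda\leftrightarrow 1-\lambda$ coming from (\ref{eqs3}) shows that the eigenvalues occur in pairs $\{1+\rho_i,-\rho_i\}$ with $\rho_i\ge 0$: the $S$-positive eigenvectors are precisely those with eigenvalue $\ge 1$, and their $J$-partners are the $S$-negative ones with eigenvalue $\le 0$. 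A signature count (the positive index of $S$ equals $N$) shows there are exactly $N$ of each.

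Finally I would assemble the transformation. For each eigenvalue $1+\rho_i$ I choose an $S$-orthonormal basis $\{e_i\}$ of the corresponding $S$-positive eigenspace, which is possible because $S$ is definite there, and I set $f_i:=Je_i$; these lie in the eigenspaces for $-\rho_i$, are $S$-orthonormal of the opposite sign, and together $\{e_i\}\cup\{Je_i\}$ form an $S$-orthonormal basis matching the signature of $S$. Let $T$ be the change of basis sending the standard basis of $\bbC^N\oplus\bbC^N$ to $\{e_i\}\cup\{Je_i\}$. Then $T\str S T=S$ because both bases are $S$-orthonormal with the same signature; $TJ=JT$ because $T$ carries the $J$-swap of the standard vectors to the pairing $e_i\leftrightarrow Je_i$; and $T^{-1}PT=\operatorname{diag}(1+\rho_i,-\rho_i)$, which is exactly the form $P_\rho$ of (\ref{GaugeQFS}) with $\rho=\operatorname{diag}(\rho_i)$ diagonal. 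Setting $\varphi\str:=T^{-1}$ then produces an admissible Bogoliubov transformation with $\varphi\str P(\varphi\str)^{-1}=P_\rho$, as required. The one point that needs extra care is the degenerate case $\ker Q\neq\{0\}$ (pure modes, $\rho_i=0$), but it is absorbed without modification, since both the diagonalizability argument and the sign computations were carried out without assuming $Q$ invertible.
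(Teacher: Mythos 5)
Your proof is correct, and it shares the paper's endgame (build an $S$-orthonormal eigenbasis $\{v_i\}\cup\{Jv_i\}$ and read the change of basis off as a Bogoliubov transformation), but the crucial step --- that $P$ is diagonalizable with real spectrum --- is established by a genuinely different argument. The paper first symmetrizes by setting $\widetilde{P}:=P-\frac{1}{2}$, so that $J\widetilde{P}J=-\widetilde{P}$, and proves the \emph{strict} positivity $S\widetilde{P}>0$ via the identity $2S\widetilde{P}=SP+JSPJ$ together with (\ref{eqs3}); diagonalizability is then immediate because $\widetilde{P}=S(S\widetilde{P})$ is similar to the self-adjoint matrix $(S\widetilde{P})^{1/2}S(S\widetilde{P})^{1/2}$. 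You instead work with the merely semidefinite $Q=SP$ and exclude Jordan blocks head-on: positivity pushes any defective eigenvalue to $0$, and the symmetry $\lambda\leftrightarrow 1-\lambda$ from (\ref{eqs3}) transports a would-be chain at $0$ to one at $1$, where it is excluded. The paper's shift buys strict positivity, hence no zero eigenvalue of $\widetilde{P}$, no degenerate cases, and the clean symmetric spectrum $\pm\lambda_i$ with $\rho_i=\lambda_i-\frac{1}{2}$; your route avoids the square-root/functional-calculus trick and reads the pairs $\{1+\rho_i,-\rho_i\}$ directly off $P$, at the cost of treating $\ker P$ (the $\rho_i=0$ modes) separately through the sign transport $\langle Jv,SJv\rangle=-\langle v,Sv\rangle$, which you do correctly. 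One compression is worth making explicit: in the Jordan-chain computation, pairing gives $\langle v,Sv\rangle=(\bar\lambda-\lambda)\langle v,Sw\rangle$, which yields $\langle v,Sv\rangle=0$ only once $\lambda$ is known to be real; this is easily closed (for any eigenvalue, either $\langle v,Sv\rangle\neq 0$, and then pairing $v$ with itself forces $\lambda\in\bbR$, or $\langle v,Sv\rangle=0$, and then $Qv=0$ forces $\lambda=0$), but a sentence to that effect should be added.
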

\begin{proof}
Let $\widetilde{P} := P-\frac{1}{2}$, in terms of which (\ref{eqs1}, \ref{eqs3}) hold true alike, except for $1$ replaced by $0$. The positivity of $SP$ and the identity $2 S\widetilde{P} = SP - S JPJ = SP + JSP J$ imply that $S\widetilde{P} \geq 0$. Actually, $S\widetilde{P} > 0$. Indeed, $\langle \zeta, S \tilde{P} \zeta\rangle = 0$ implies by the above identity $P\zeta=0$ and $PJ\zeta=0$, which by (\ref{eqs3}) gives $\zeta=0$.
Let now $\chi$ be an eigenvector of the self-adjoint $(S\widetilde{P})^{1/2}S(S\widetilde{P})^{1/2}$; then the non-zero vector $v := S(S\widetilde{P})^{1/2}\chi$ is an eigenvector of $S(S\widetilde{P}) = \widetilde{P}$ for the same eigenvalue $\lambda$. With this, $0< \langle v, S\widetilde{P} v\rangle = \lambda\langle v, S v\rangle$ and since $S$ is self-adjoint, both $\lambda$ and $\langle v, S v\rangle$ are real and non-zero. In particular, the quadratic form $S$ is positive definite on each eigenspace with $\lambda>0$, which thus has a basis $(v_i)$ with  $\langle v_i, S v_j\rangle = \delta_{ij}$. We further note that by $J \widetilde{P} J = -\widetilde{P}$ the vector $\hat v:= Jv$ is an eigenvector with opposite eigenvalue: $\widetilde{P} \hat v = -\overline \lambda \hat v=-\lambda \hat v$. Collecting the vectors $v_i$ for all positive eigenvalues, together with $\hat v_i=Jv_i$, we end up with a basis $v_1,\ldots,v_N,\hat v_1,\ldots \hat v_N\in\bbC^N\oplus\bbC^N$ satisfying 
\begin{equation*}
\langle v_i, Sv_j\rangle =\delta_{ij} ,\qquad \langle \hat v_i, S\hat v_i\rangle = -\delta_{ij},
\qquad\langle \hat v_i, Sv_j\rangle =0
\end{equation*}
because for eigenvectors $v$, $v'$ with eigenvalues $\lambda$, $\lambda'$ we have $(\lambda-\lambda')\langle v',Sv\rangle=0$ by (\ref{eqs1}).

Using the canonical basis $e_1,\ldots ,e_N$ of $\bbC^N$, the map $\varphi$ defined by
\begin{equation*}
\varphi\str v_i = e_i\oplus 0,\qquad \varphi\str \hat v_i = 0\oplus e_i,
\end{equation*}
is a Bogoliubov transformation. Indeed, $\varphi S \varphi\str = S$ since both sides share the same matrix elements, and $\varphi\str J = J\varphi\str$ by construction. Finally,
\begin{equation*}
\varphi\str\tilde P(\varphi\str)^{-1} (e_i\oplus 0) = 
\varphi\str\tilde Pv_i= \lambda_i  \varphi\str v_i  = \lambda_i (e_i\oplus 0)
\end{equation*}
and similarly $\varphi\str \tilde P(\varphi\str)^{-1} (0 \oplus e_i) = -\lambda_i (0\oplus e_i)$. Summarizing, for any $P$ there is a Bogoliubov transformation $\varphi$ such that $\varphi\str P (\varphi\str)^{-1} $ is of the form~(\ref{GaugeQFS}) with a diagonal matrix $\rho=\mathrm{diag}(\rho_1\ldots,\rho_N)$ obtained from the above eigenvalues by $\rho_i=\lambda_i - 1/2$.
\end{proof}
The Fock space $\mathcal{F}$ is a Hilbert space carrying an irreducible representation of the (finitely generated) CCR-algebra, and it is unique up to isomorphism. There is a vector, the Fock vacuum $\psi_0\in\caF$, ($\|\psi_0\|=1$) unique up to a phase and characterized by $a(f)\psi_0=0$, ($f\in \bbC^N$). We recall the functor of second quantization, $\Gamma$, which promotes single-particle operators $B$ on $\mathbb C^N$ to operators on $\mathcal{F}$ by $\Gamma(B)a\str(f)=a\str(Bf)\Gamma(B)$.

\comment{The standard representation of the Fock space is as the symmetric algebra $\mathcal{F} = \bigoplus_{k=0}^\infty(\mathbb{C}^N)^{\otimes_s^k}$. We recall the functor of second quantization, $\Gamma$, which promotes single-particle operators $B$ on $\mathbb C^N$ to operators on $\mathcal{F}$ by $\Gamma(B)=B\otimes \ldots\otimes B$ ($k$ times) on $k$-particle states in $\mathbb C^N\otimes \ldots\otimes\mathbb C^N$ for any $k\in\mathbb N$.}
\begin{lemma}\label{lemma:normal1}
Any gauge-invariant state $\omega_\rho$ is realized by a unique density matrix on $\mathcal{F}$, meaning $\omega_\rho(A)=\mathrm{Tr}_\caF(\nu_\rho A)$. In fact
\begin{equation}\label{density matrix}
\nu_\rho = Z^{-1}\Gamma(Q),\qquad Q=\rho(1+\rho)^{-1}
\end{equation}
with $Z=\Tr_{\mathcal F}\Gamma(Q)$. In particular $\nu_\rho$ is given by a vector iff $\rho=0$, in which case it is the Fock vacuum,
\begin{equation}\label{Fock vacuum}
\omega_0(A)=\langle\psi_0,A\psi_0\rangle.
\end{equation}
\end{lemma}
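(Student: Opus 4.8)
The plan is to verify directly that the explicit operator $\nu_\rho = Z^{-1}\Gamma(Q)$, with $Q = \rho(1+\rho)^{-1}$, is a density matrix reproducing all the correlators that define $\omega_\rho$, and then to invoke irreducibility of the Fock representation for uniqueness. First I would check that $\nu_\rho$ is a legitimate state. Since $\rho = \rho^* \ge 0$ one has $0 \le Q < 1$ as operators, so $\Gamma(Q) = \Gamma(Q^{1/2})^*\Gamma(Q^{1/2}) \ge 0$ by multiplicativity of $\Gamma$, and $Z = \Tr_\caF\Gamma(Q) = \det(1-Q)^{-1} = \det(1+\rho) < \infty$ because $\|Q\| < 1$; hence $\nu_\rho \ge 0$ with unit trace. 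Gauge invariance, $\omega_\rho(a(g)a(f)) = 0$, is immediate: $\Gamma(Q)$ is diagonal in the total particle number, so it commutes with $\caN = \sum_i a_i\str a_i$ and annihilates every expectation of a number-non-conserving monomial.

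The heart of the argument is the two-point function, which I would obtain from the intertwining relation. From $\Gamma(Q)a\str(f) = a\str(Qf)\Gamma(Q)$ and self-adjointness of $\Gamma(Q)$ one gets, by taking adjoints, the pull-through identity $a(f)\Gamma(Q) = \Gamma(Q)a(Qf)$. Combining this with cyclicity of the trace and the canonical commutation relation $[a(Qf), a\str(g)] = \langle Qf, g\rangle$ yields the recursion $\omega_\rho(a\str(g)a(f)) = \omega_\rho(a\str(g)a(Qf)) + \langle Qf, g\rangle$. Iterating and using $\|Q\| < 1$, so that the boundary term $\omega_\rho(a\str(g)a(Q^n f)) \to 0$ as $Q^n f\to 0$, gives the geometric series $\omega_\rho(a\str(g)a(f)) = \langle \sum_{k\ge1}Q^k f, g\rangle = \langle Q(1-Q)^{-1}f, g\rangle = \langle \rho f, g\rangle = \langle f, \rho g\rangle$, which is exactly (\ref{GaugeQFS0}). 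The very same pull-through-plus-cyclicity manipulation, applied to an arbitrary monomial $a^{\#}(f_1)\cdots a^{\#}(f_m)$, reproduces the Wick recursion: moving one annihilator through $\Gamma(Q)$ and around the trace generates precisely the sum over pair contractions. Equivalently one may diagonalize $\rho$ by a gauge Bogoliubov transformation $\Gamma(u)$, under which both $\omega_\rho$ and $\nu_\rho$ transform covariantly, whereupon $\nu_\rho$ factorizes into a tensor product of single-mode Gibbs states $(1-q_i)\,q_i^{\,a_i\str a_i}$ for which Wick's theorem is classical. Either way $\nu_\rho$ is quasi-free, and since it matches the defining one- and two-point data it agrees with $\omega_\rho$ on the whole CCR-algebra.

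For uniqueness I would argue that $\Tr_\caF(\nu\,\cdot)$ evaluated on the Weyl operators $\ep{a\str(f)-a(f)}$, i.e. the quantum characteristic function, determines a trace-class $\nu$ uniquely; since the values of $\omega_\rho$ on the CCR-algebra fix this characteristic function, no two density matrices can realize $\omega_\rho$. Concretely this is the statement that a normal state on $B(\caF)$ has a unique density matrix, together with irreducibility of the Fock representation. Finally the case $\rho = 0$ is read off directly: then $Q = 0$, $Z = 1$, and $\Gamma(0)$ is the projection $\ket{\psi_0}\bra{\psi_0}$ onto the $0$-particle sector, so $\nu_0 = \ket{\psi_0}\bra{\psi_0}$ and (\ref{Fock vacuum}) holds; moreover $\Gamma(Q)$ has eigenvalues $\prod_i q_i^{\,n_i}$ and is rank one iff every $q_i = 0$, that is iff $\rho = 0$, which is precisely when $\nu_\rho$ is given by a vector.

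The step I expect to be the main obstacle is establishing that reproducing the covariance is enough, i.e. verifying Wick's rule for $\nu_\rho$. Matching the two-point function is a short computation, but one must still control all higher correlators; the pull-through recursion does this cleanly, yet it involves unbounded operators under the trace and the convergence of the resulting series, which is where the bound $\|Q\| < 1$ (equivalently $\rho < \infty$) is essential.
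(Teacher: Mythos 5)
Your proposal is correct and follows essentially the same route as the paper: the paper likewise verifies that the explicit density matrix $\nu_\rho=Z^{-1}\Gamma(Q)$ reproduces the defining correlators and Wick's rule by combining the pull-through relations $a_i\Gamma(Q)=q_i\Gamma(Q)a_i$, $\Gamma(Q)a_i^*=q_ia_i^*\Gamma(Q)$ with cyclicity of the trace (Gaudin's argument), and it also obtains uniqueness from irreducibility of the Fock representation. The only cosmetic differences are that the paper works in a basis diagonalizing $\rho$ and solves the resulting recursion algebraically in one step (rather than iterating to a geometric series with a boundary-term estimate), carries out in full the Wick-rule bookkeeping you only sketch, and settles the ``vector iff $\rho=0$'' claim via $\Gamma(Q)^2=\Gamma(Q^2)$ rather than by reading off the eigenvalues of $\Gamma(Q)$.
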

\begin{proof} Uniqueness of $\nu_\rho$ follows because the CCR-algebra is irreducibly represented on $\caF$.
 Existence: Following \cite{Gaudin:1960aa} we shall show that $\omega$, as defined by (\ref{density matrix}), is a quasi-free state with the appropriate two-point functions (\ref{GaugeQFS0}), namely
\begin{equation}\label{GaugeQFSbis}
\omega(a_i a_j) =0, \qquad\omega(a_i\str a_j)=\rho_i\delta_{ij},
\end{equation}
where we assumed without loss that $\rho$ is diagonal. Interchanging $a_k$ and $a_k\str$ in (\ref{GaugeQFSbis}) amounts to replace $\rho_i$ by $1+\rho_i$. Let $b_i$ be either $a_k$ or $a_k\str$ and set $[b_ib_j]:=b_ib_j - b_j b_i \in\{0,\pm1\}$. Then all four equations are summarized by 
\begin{equation}\label{GaugeQFSter}
\omega(a_i\str b_j)=-\rho_i[a_i\str b_j], \qquad
\omega(a_i b_j)=(1+\rho_i)[a_i b_j].
\end{equation}
Wick's rule will follow immediately by iteration from the claim 
\begin{equation}\label{wr}
\omega(b_0B)=\sum_{j=1}^n\omega(b_0b_j)\omega(B_j),
\end{equation}
where we set $B=b_1\cdots b_n$ and obtained $B_j$ by omitting from it the factor $b_j$. In fact by using $b_0b_j=b_jb_0+[b_0b_j]$ repeatedly we have $b_0B=Bb_0+\sum_{j=1}[b_0b_j]B_j$ and
\begin{equation}\label{wr1}
\Tr(\nu_\rho b_0B)=\Tr(\nu_\rho Bb_0)+\sum_{j=1}^n[b_0b_j]\Tr(\nu_\rho B_j).
\end{equation}
Moreover we have the pair
\begin{equation}\label{Qa}
a_i\Gamma(Q)=q_i\Gamma(Q)a_i,\qquad\Gamma(Q)a_i\str=q_ia_i\str\Gamma(Q). 
\end{equation}
of adjoint equations, of which one or the other will be used for $b_0=a_i$ and for $b_0=a_i\str$. In the first case we first use (\ref{wr1}) and then (\ref{Qa}) through $\Tr(\nu_\rho Ba_i)=\Tr(a_i\nu_\rho B)=q_i\Tr(\nu_\rho a_iB)$. By $1-q_i=(1+\rho_i)^{-1}$ the l.h.s.~of~(\ref{wr}) is found to be
\begin{equation*}
\Tr(\nu_\rho a_iB)=\sum_{j=1}^n(1+\rho_i)[a_ib_j]\Tr(\nu_\rho B_j)
\end{equation*}
which is the r.h.s.~by~(\ref{GaugeQFSter}). In the second case we first use (\ref{Qa}) as $\Tr(\nu_\rho a_i\str B)=q_i\Tr(\nu_\rho Ba_i\str)$ and then (\ref{wr1}). By $q_i(1-q_i)^{-1}=\rho_i$ we now find 
\begin{equation*}
\Tr(\nu_\rho a_i\str B)=-\sum_{j=1}^n\rho_i[a_i\str b_j]\Tr(\nu_\rho B_j).
\end{equation*}
We now first use the two equations for $B=a_j$ to obtain (\ref{GaugeQFSbis}) and hence (\ref{GaugeQFSter}), at which point they read like (\ref{wr}).

Finally the last sentence of the lemma follows from $\Gamma(Q)^2=\Gamma(Q^2)$, which shows that $\nu_\rho=\nu_\rho^2$ iff $\rho=0$.
\end{proof}
\begin{cor}\label{lemma:normal2} 
Any quasi-free state $\omega_P$ is realized by a density matrix (\ref{density matrix}) on a Fock space $\widetilde{\caF}$ isomorphic to $\caF$. In particular it is realized by a vector iff $P=P^2$.
\end{cor}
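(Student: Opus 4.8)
The plan is to deduce the statement from the two preceding lemmas by reducing the general quasi-free state $\omega_P$ to a gauge-invariant one. By Lemma~\ref{lemma:diagonalization} there is a Bogoliubov transformation $\varphi$ with $\varphi\str P(\varphi\str)^{-1}=P_\rho$ of the gauge-invariant form (\ref{GaugeQFS}), where $\rho=\mathrm{diag}(\rho_1,\ldots,\rho_N)\ge 0$. Since $\varphi^{-1}$ is again a Bogoliubov transformation and $(\varphi^{-1})\str=(\varphi\str)^{-1}$ (both equal $S\varphi S$ by $\varphi S\varphi\str=S$), the similarity action (\ref{PBog}) of $\varphi^{-1}$ sends $P\mapsto\varphi\str P(\varphi\str)^{-1}=P_\rho$. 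In algebraic terms this says $\omega_{P_\rho}=\omega_P\circ\alpha_{\varphi^{-1}}$, equivalently $\omega_P=\omega_{P_\rho}\circ\alpha_\varphi$, where $\alpha_\varphi$ is the automorphism $A(\xi)\mapsto A(\varphi\str\xi)$ of (\ref{selfdualBog}). Setting $\tilde a_i:=\alpha_{\varphi^{-1}}(a_i)$, which again generate the CCR-algebra, one computes $\omega_P(\tilde a_i\str\tilde a_j)=\omega_{P_\rho}(a_i\str a_j)=\rho_i\delta_{ij}$ and $\omega_P(\tilde a_i\tilde a_j)=0$; that is, $\omega_P$ is \emph{gauge-invariant in the tilded generators}.

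Next I would apply Lemma~\ref{lemma:normal1} verbatim to these new generators. Let $\widetilde{\caF}$ be the Fock space they generate, i.e.\ the one whose vacuum $\tilde\psi_0$ satisfies $\tilde a_i\tilde\psi_0=0$. Because $\omega_P$ is gauge-invariant with two-point function $\rho$ relative to $\tilde a_i$, Lemma~\ref{lemma:normal1} yields that it is realized on $\widetilde{\caF}$ by the density matrix $\tilde\nu_\rho=\tilde Z^{-1}\tilde\Gamma(Q)$, $Q=\rho(1+\rho)^{-1}$, exactly of the form (\ref{density matrix}). This is the desired realization on a Fock space; it remains to identify $\widetilde{\caF}$ with $\caF$.

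For the purity criterion, I would invoke the last sentence of Lemma~\ref{lemma:normal1}: $\tilde\nu_\rho$ is given by a vector iff $\rho=0$. Now $P=(\varphi\str)^{-1}P_\rho\,\varphi\str$ is a similarity transform of $P_\rho$, hence idempotent iff $P_\rho$ is, and by the remark following (\ref{GaugeQFS}) one has $P_\rho=P_\rho^2$ iff $\rho=0$. Combining these equivalences, $\omega_P$ is realized by a vector iff $P=P^2$, which is the second assertion.

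I expect the genuine content to lie in the identification $\widetilde{\caF}\cong\caF$, i.e.\ in controlling what the change of generators does to the representation: passing from $a_i$ to $\tilde a_i$ generally moves the vacuum, so a priori $\widetilde{\caF}$ is a different Fock representation. The resolution is that for finitely many modes the CCR-algebra admits, up to unitary equivalence, a unique irreducible representation (Stone--von Neumann), so the Fock representation built on $\tilde a_i$ is isomorphic to the one built on $a_i$; equivalently, $\alpha_\varphi$ is implemented by a unitary $W$ on $\caF$ and one may take $\tilde\nu:=W\nu_\rho W\str$ on $\caF$ itself. The reduction to the gauge-invariant case and the purity equivalence are then essentially bookkeeping layered on Lemmas~\ref{lemma:diagonalization} and~\ref{lemma:normal1}.
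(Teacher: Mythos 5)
Your proof is correct and takes essentially the same route as the paper: reduce to the gauge-invariant case via Lemma~\ref{lemma:diagonalization}, pass to the transformed generators $\tilde a_i=A\bigl((\varphi\str)^{-1}(e_i\oplus 0)\bigr)$ (these are exactly the paper's $\tilde a(f)=\tilde A(f\oplus 0)$), apply Lemma~\ref{lemma:normal1} in that representation, and settle the purity claim by combining similarity-invariance of idempotency with $P_\rho=P_\rho^2$ iff $\rho=0$. Your appeal to Stone--von Neumann for $\widetilde{\caF}\cong\caF$ is likewise the paper's implicit mechanism, since the uniqueness of the irreducible CCR representation is recorded just before Lemma~\ref{lemma:normal1}.
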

\begin{proof} By Lemma~\ref{lemma:diagonalization} there is a Bogoliubov transformation $\varphi$ such that $P= (\varphi\str)^{-1}\tilde P\varphi\str$ with $\tilde P$ a quasi-free state of the gauge-invariant form (\ref{GaugeQFS}). We set $\tilde A(\xi)=A((\varphi\str)^{-1}\xi)$ so that $\omega_P(\tilde A(\xi)\tilde A\str(\zeta)) = \langle \xi, S\tilde P\zeta\rangle$. The claim follows by Lemma~\ref{lemma:normal1} and in fact for the Fock space $\widetilde{\caF}$ associated with the operators $\tilde a(f)=\tilde A(f \oplus 0)$, cf.~(\ref{selfdualA}). 
\end{proof}

\begin{prop}
A quasi-free state $\omega_P$ is a squeezed state if and only if $P$ is a projection, $P=P^2$.
\end{prop}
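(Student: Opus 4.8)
The plan is to read the statement off from Corollary~\ref{lemma:normal2}, observing that ``indecomposable among general states'' is precisely the assertion that $\omega_P$ is a pure (extremal) state, and that for a state realized by a density matrix purity is equivalent to the density matrix being one-dimensional, \ie to $\omega_P$ being a vector state. The argument then reduces to the chain of equivalences
$$
\omega_P \text{ squeezed} \iff \omega_P \text{ is a vector state} \iff P = P^2 ,
$$
whose last link is exactly the content of Corollary~\ref{lemma:normal2}.

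For the implication $P=P^2\Rightarrow\omega_P$ squeezed, I would invoke Corollary~\ref{lemma:normal2} to write $\omega_P(A)=\langle\Omega, A\Omega\rangle$ for a unit vector $\Omega$ in a Fock space $\widetilde{\caF}$ carrying an irreducible representation of the self-dual (CCR) algebra. A vector state of an irreducible representation is extremal in the state space, hence indecomposable; this is standard and I would cite it rather than reprove it. Thus $\omega_P$ is a squeezed state.

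For the converse I would argue by contraposition. If $P\neq P^2$, Lemma~\ref{lemma:diagonalization} yields a Bogoliubov transformation $\varphi$ carrying $P$ into the gauge-invariant form~(\ref{GaugeQFS}) with diagonal $\rho$, and $P\neq P^2$ forces $\rho\neq 0$ (by the last sentence of Lemma~\ref{lemma:normal1}). By Lemma~\ref{lemma:normal1} the state is then the density matrix $\nu_\rho=Z^{-1}\Gamma(Q)$, $Q=\rho(1+\rho)^{-1}$, whose eigenvalues are $Z^{-1}\prod_i q_i^{n_i}$ indexed by occupation vectors $(n_i)$. Since some $q_i>0$, both the vacuum and a corresponding one-particle state carry strictly positive weight, so $\nu_\rho$ has at least two nonzero eigenvalues and is not a rank-one projection. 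Splitting its spectral decomposition $\nu_\rho=\sum_n\pi_n\ket{n}\bra{n}$ into two nonempty groups produces $\nu_\rho=\lambda\nu_1+(1-\lambda)\nu_2$ with $0<\lambda<1$ and $\nu_1\neq\nu_2$, hence $\omega_P=\lambda\omega_1+(1-\lambda)\omega_2$ as a nontrivial convex combination of distinct states, so $\omega_P$ is decomposable, \ie not squeezed.

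The only genuinely delicate point is to ensure that the exhibited convex combination is a decomposition in the required sense, namely that $\omega_1,\omega_2$ are distinct as states on the abstract algebra. This follows because the representation on $\widetilde{\caF}$ is irreducible and therefore faithful, so mutually orthogonal spectral subspaces of $\nu_\rho$ yield genuinely different expectation functionals; one should also record that the Bogoliubov conjugation of Lemma~\ref{lemma:diagonalization} preserves (in)decomposability, since it implements an automorphism of the algebra. Everything else is bookkeeping already carried out in Lemmas~\ref{lemma:diagonalization}, \ref{lemma:normal1} and Corollary~\ref{lemma:normal2}.
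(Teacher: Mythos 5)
Your overall route is the same as the paper's: realize $\omega_P$ by a density matrix via Corollary~\ref{lemma:normal2} and identify indecomposability with that density matrix having rank one. Your contraposition argument for ``squeezed $\Rightarrow P=P^2$'' is a fleshed-out version of the paper's one-liner (``if $\omega_P$ is indecomposable then so is $\nu$, whence $\rho=0$''), and it works; only, the distinctness of your two pieces is better witnessed concretely (e.g.\ $\omega_1$ is the vacuum state of the tilde operators while $\omega_2(\tilde a_i\str \tilde a_i)>0$ for some $i$) than by the appeal to ``irreducibility hence faithfulness'' --- faithfulness of the representation is not the relevant property; what you need is that the algebra separates the two functionals, and the number operator does that immediately.

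The genuine gap is in the direction $P=P^2\Rightarrow\omega_P$ squeezed. You cite as ``standard'' that a vector state of an irreducible representation is extremal. That statement is standard for C*-algebras, where it is proved by a Radon--Nikodym type argument: a positive functional dominated by $\omega$ has the form $\langle\Omega, T\pi(\cdot)\Omega\rangle$ with $T$ a bounded operator in the commutant, and irreducibility forces $T$ to be scalar. Here, however, the states live on the polynomial CCR-algebra generated by the unbounded operators $a_i,a_i\str$; this is not a C*-algebra, the commutant machinery does not apply verbatim, and for unbounded $*$-algebras the relation between vector states, irreducibility and purity is genuinely delicate. This missing step is precisely the work the paper's proof does, by a short self-contained argument: if $\omega_P=\omega_1+\omega_2$ with $\omega_i\ge 0$, then for any nonempty product $B$ of (tilde) annihilation operators one has $\omega_i(B\str B)\le\omega_P(B\str B)=0$ by (\ref{Fock vacuum}), whence by Cauchy--Schwarz $\omega_i(B_1\str B_2)=0=\omega_i(1)\,\omega_P(B_1\str B_2)$ for every normal-ordered monomial $B_1\str B_2$ with $B_1,B_2$ not both empty; since such monomials span the algebra by the commutation relations, $\omega_i=\omega_i(1)\,\omega_P$, i.e.\ $\omega_P$ is indecomposable. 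You should either reproduce this argument (or an equivalent one adapted to the unbounded setting) or cite a reference that explicitly covers states on the CCR $*$-algebra; the bounded-case citation alone does not close the proof.
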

\begin{proof}
By Corollary~\ref{lemma:normal2} $\omega_P$ is realized by a density matrix $\nu$ on $\widetilde{\caF}$. If $\omega_P$ is indecomposable then so is $\nu$, whence $\rho=0$ and $P=P^2$. Conversely, if $P$ is a projection and $\omega_P=\omega_1+\omega_2$ with general states $\omega_i\ge 0$, $\omega_i(1)>0$, then $\omega_i(A)=\omega_i(1)\omega_P(A)$, as we will show momentarily; whence $\omega_P$ is indecomposable. By the commutation relations it suffices to prove the contention for elements of the form $A=B_1\str B_2$ where $B_j$ are products of annihilation operators, not both empty. In this case both sides vanish: Clearly $\omega_P(A)=0$, because (\ref{Fock vacuum}) applies, but also $|\omega_i(B_1\str B_2)|^2\le\omega_i(B_1\str B_1)\omega_i(B_2\str B_2)$ and $\omega_i(B\str B)\le \omega_P(B\str B)=0$.
\end{proof}

We conclude that any squeezed state, as defined earlier in this section, is realized by a vector $\psi\in\caF$, since $\widetilde{\caF}=\caF$ as Hilbert spaces by the proof of Corollary~\ref{lemma:normal2}. We shall characterize $\psi$ for later use. Denoting the (non-unique) Bogoliubov transformation $\varphi$ used there by (\ref{genBog}), we have
\begin{equation*}
\tilde a(f)=A((\varphi\str)^{-1}(f \oplus 0))=a(Uf)+a\str(-\overline{V}\,\overline{f}),
\end{equation*}
where the second expression follows from (\ref{selfdualA}) in view of $(\varphi\str)^{-1} = S\varphi S$ and 
\begin{equation}
\label{inverse}
S\varphi S = 
\begin{pmatrix}
U & -\overline{V} \\ -V & \overline U
\end{pmatrix}.
\end{equation}
The squeezed state is thus characterized by $\tilde a(f)\psi = 0$, \ie
\begin{equation}\label{SqSt2}
(U\str a-V\str a\str )\psi = 0,
\end{equation}
where $a$ is shorthand for the column vector $(a_1,\ldots,a_N)$. 

We now specialize to $N=1$, thus considering a single annihilation operator $a$.
Then (\ref{SqSt2}) agrees with (\ref{SqSt}) for $\mu/\lambda= -\overline V/\overline U$, in line with (\ref{Csqst}). A general matrix $\varphi \in \mathrm{SU}(1,1)$ can be encoded by a real number $\rtn$ and a complex number $\tau = r \ep{i\theta}$ through the (unique) decomposition $\varphi=\varphi_\tau\varphi_\rtn$ with
\begin{equation*}
\varphi_\rtn = \begin{pmatrix} \ep{\iu\rtn} & 0 \\ 0 & \ep{-\iu\rtn} \end{pmatrix} , \qquad \varphi_\tau = \begin{pmatrix} \cosh r & \ep{\iu\theta}\sinh r \\ \ep{-\iu\theta}\sinh r & \cosh r \end{pmatrix}.
\end{equation*}
The squeezed state associated to $\varphi$ does not depend on $\rtn$ because $\varphi_\rtn$ leaves the standard Fock vacuum invariant. As we shall prove shortly, the transformation 
\begin{equation*}
a \mapsto a_\tau := (\cosh r) a + (e^{i \theta} \sinh r) a^*
\end{equation*}
is unitarily implemented on $\mathcal F$ by
\begin{equation*}
\mathcal{U}_\tau = \exp\Bigl(\frac{1}{2} \bigl(\bar{\tau} (a)^2 - \tau (a^*)^2\bigr)\Bigr),
\end{equation*}
meaning 
\begin{equation}
a_\tau = \mathcal{U}_\tau  a \mathcal{U}_\tau\str. \label{U Impl} \\
\end{equation}
In particular, $\psi = \caU_\tau \psi_0$ is the squeezed state with parameter $\tau$ since $a_\tau \psi=\caU_\tau a \psi_0 =0$, see~(\ref{SqSt}, \ref{quot}).\\

The next lemma, while standard, is stated and proved for the sake of completeness.
\begin{lemma}
\label{lemma:U squeezed}
With the notation of the previous paragraph, and $\{\vert n\rangle : n\in\bbN\}$ the usual Fock space basis of occupation numbers, we have (\ref{U Impl}) and
\begin{equation}
\mathcal{U}_\tau \ket{0} = \frac{1}{\sqrt{\cosh r}}\sum_{n\geq 0}\ep{\iu n \theta} (-\tanh r)^nq_n\vert 2n\rangle \label{Nrep}
\end{equation}
with $\tau=r\ep{\iu\theta}$ and
\begin{equation*}
q_n=\sqrt{\frac{(2n-1)!!}{(2n)!!}}=\frac{\sqrt{(2n)!}}{2^nn!}.
\end{equation*}
\end{lemma}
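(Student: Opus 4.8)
The plan is to establish the unitary implementation (\ref{U Impl}) first by the Hadamard lemma, and then to extract the Fock expansion (\ref{Nrep}) by combining the resulting annihilation condition with a normalization computation. The bulk of the work is routine algebra; the only subtle point is pinning down an overall phase.

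First I would set $K := \frac{1}{2}(\bar\tau a^2 - \tau (a\str)^2)$, which is anti-hermitian, so that $\mathcal{U}_\tau = \ep{K}$ is unitary. A direct computation from $[a,a\str]=1$ gives $[K,a] = \tau a\str$ and $[K,a\str] = \bar\tau a$; in particular the adjoint action of $K$ leaves the two-dimensional space $\mathrm{span}\{a,a\str\}$ invariant. Writing $f(s) = \ep{sK} a \ep{-sK}$ and $g(s) = \ep{sK} a\str \ep{-sK}$, the Hadamard formula turns conjugation into the coupled linear system $f' = \tau g$, $g' = \bar\tau f$, hence $f'' = |\tau|^2 f = r^2 f$ with $f(0)=a$, $f'(0)=\tau a\str$. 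Solving and evaluating at $s=1$ yields $\mathcal{U}_\tau a \mathcal{U}_\tau\str = f(1) = (\cosh r)\,a + (\ep{\iu\theta}\sinh r)\,a\str = a_\tau$, which is (\ref{U Impl}).

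Next, (\ref{U Impl}) shows that $\psi := \mathcal{U}_\tau\ket{0}$ is annihilated by $a_\tau$, since $a_\tau\psi = \mathcal{U}_\tau a \mathcal{U}_\tau\str\mathcal{U}_\tau\ket{0} = \mathcal{U}_\tau a\ket{0}=0$, equivalently $(a + \ep{\iu\theta}\tanh r\, a\str)\psi = 0$. Expanding $\psi = \sum_{n\ge 0} c_n\ket{n}$ and matching coefficients gives the two-term recursion $\sqrt{n+1}\,c_{n+1} + \ep{\iu\theta}\tanh r\,\sqrt{n}\,c_{n-1}=0$. The $n=0$ case forces $c_1=0$, so all odd coefficients vanish, while iterating the even part produces $c_{2k} = (-\ep{\iu\theta}\tanh r)^k\, q_k\, c_0$ with $q_k = \sqrt{(2k-1)!!/(2k)!!}$, matching the summand of (\ref{Nrep}) up to the constant $c_0$.

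It remains to fix $c_0$, and this is the one genuinely delicate step. Its modulus follows from normalization: $\sum_k |c_{2k}|^2 = |c_0|^2 \sum_k \frac{(2k-1)!!}{(2k)!!}(\tanh r)^{2k} = |c_0|^2 (1-\tanh^2 r)^{-1/2} = |c_0|^2\cosh r$, using the binomial identity $\sum_k \frac{(2k-1)!!}{(2k)!!}x^k = (1-x)^{-1/2}$, whence $|c_0| = (\cosh r)^{-1/2}$. The phase, which the annihilation condition alone cannot see, I would settle by differentiating along the ray: with $\psi(r) = \ep{rX}\ket{0}$ and $X = \frac{1}{2}(\ep{-\iu\theta}a^2 - \ep{\iu\theta}(a\str)^2)$, projecting $\frac{\d}{\d r}\psi = X\psi$ onto $\bra{0}$, using $\bra{0}(a\str)^2=0$ together with the known value of $c_2$, yields the scalar ODE $c_0'(r) = -\frac{1}{2}(\tanh r)\,c_0(r)$ with $c_0(0)=1$, whose solution $c_0(r)=(\cosh r)^{-1/2}$ is real and positive. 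This completes (\ref{Nrep}). Alternatively, both the magnitude and the phase drop out at once from the $\mathrm{su}(1,1)$ normal-ordering (Gauss) decomposition of $\mathcal{U}_\tau$ applied to $\ket{0}$, which I would mention as the cleaner but less elementary route.
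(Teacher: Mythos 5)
Your argument is correct, and it splits into a part that coincides with the paper's proof and a part that is genuinely different. For (\ref{U Impl}) you are essentially doing what the paper does, in different clothing: the paper differentiates $\caU_\tau a \caU_\tau\str$ in $r$ at fixed $\theta$ and matches the resulting linear ODE against $\partial_r a_\tau = \ep{\iu\theta}a_\tau\str$, while you run the Hadamard flow $s\mapsto \ep{sK}a\ep{-sK}$ and solve the coupled $2\times 2$ system; since $K=rX$ with $X=\frac{1}{2}(\ep{-\iu\theta}a^2-\ep{\iu\theta}(a\str)^2)$, the two computations agree up to reparametrization. For (\ref{Nrep}), however, your route genuinely differs. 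The paper takes the series as given and verifies that it solves the vector-valued ODE $\frac{\d y}{\d r}=Xy$ with the vacuum as initial condition, which the left-hand side obviously does; this checks coefficients and overall phase in one stroke, but presupposes the answer. You instead \emph{derive} the coefficients from the annihilation condition $(a+\ep{\iu\theta}\tanh r\,a\str)\psi=0$ via the two-term recursion, fix $|c_0|$ by the binomial identity $\sum_k\frac{(2k-1)!!}{(2k)!!}x^k=(1-x)^{-1/2}$, and---correctly recognizing that recursion and normalization are blind to the phase of $c_0$, since (\ref{SqSt}) determines a squeezed state only up to phase---pin the phase down by projecting $\frac{\d}{\d r}\psi=X\psi$ onto $\bra{0}$, obtaining $c_0'=-\frac{1}{2}(\tanh r)\,c_0$, $c_0(0)=1$, hence $c_0=(\cosh r)^{-1/2}$. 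That scalar ODE is the one place where your argument, like the paper's, still leans on the flow in $r$; without it (or the $\mathrm{su}(1,1)$ disentangling formula you mention as an alternative) the phase would indeed remain undetermined. Both proofs operate at the same formal level of rigor concerning the unbounded generator, so no objection there. The trade-off: your approach constructs (\ref{Nrep}) rather than merely verifying it, at the cost of the extra normalization identity and the separate phase-fixing step, which you correctly isolate as the only delicate point.
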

\begin{proof} On the one hand,
\begin{equation*}
\frac{\partial }{\partial r}a_\tau = (\sinh r) a  + (\ep{\iu\theta} \cosh r) a\str  = \ep{\iu\theta} a_\tau \str.
\end{equation*}
On the other hand,
\begin{equation*}
\frac{\partial}{\partial r}(\caU_\tau a \caU_\tau\str) = \frac{1}{2}\caU_\tau
[a, \ep{\iu\theta}(a\str)^2- \ep{-\iu\theta}a^2]\caU_\tau\str = \ep{\iu\theta}\caU_\tau a \str  \caU_\tau\str,
\end{equation*}
by $[a, (a\str)^2]=2 a\str$. Since $\caU_0a \caU_0\str  = a = a_0$, (\ref{U Impl}) follows by uniqueness of the solution of the ODE.

Eq.~(\ref{Nrep}) is similarly proved by showing that the r.h.s. satisfies the same ODE 
\begin{equation*}
\frac{dy}{dr}=\frac{1}{2}(\ep{-\iu\theta}a^2-\ep{\iu\theta}(a\str)^2)y,\qquad y(0)=1
\end{equation*}
that the l.h.s. obviously does. This follows by comparing
\begin{align}
\frac{d}{dr}\frac{(\tanh r)^n}{\sqrt{\cosh r}}
&=\frac{(\tanh r)^n}{\sqrt{\cosh r}}\Bigl(n(\tanh r)^{-1}-\bigl(n+\frac{1}{2}\bigr)\tanh r\Bigr)
\label{bra}
\end{align}
with
\begin{gather*}
a^2\vert 2n\rangle =\sqrt{2n(2n-1)}\vert 2(n-1)\rangle,\\
(a\str)^2\vert 2n\rangle =\sqrt{(2n+2)(2n+1)}\vert 2(n+1)\rangle,\\
q_n\sqrt{2n(2n-1)}=q_{n-1}(2n-1)=2q_{n-1}\bigl((n-1)+\frac{1}{2}\bigr),\\
q_n\sqrt{(2n+2)(2n+1)}=q_{n+1}(2n+2)=2q_{n+1}(n+1),
\end{gather*}
where we used the first expression for $q_n$. The action of $\ep{-\iu\theta}a^2/2$ on the r.h.s. of~(\ref{Nrep}), followed by the replacement of $n-1$ with $n$ in the sum, leads to the second term in the bracket (\ref{bra}), in line with the claim. The action of $-\ep{\iu\theta}(a\str)^2$ is likewise seen to match the first term by means of the opposite shift of $n$.

The other writing of $q_n$ follows by $(2n)!! = 2^n n!$ and $(2n-1)!! = (2n)! / 2^n n!$.
\end{proof}

The transition probabilities seen in Theorem~\ref{THM} follow by inserting $\tanh r$ $=1/\sqrt 2$, $\cosh r=\sqrt 2$ in (\ref{Nrep}). 

\subsection{Geometry of squeezed states}
\label{sec:geom}
We start by recalling general geometric facts from adiabatic theory, see e.g.~\cite{AFG}. Given a family of projections $P_\varphi$ depending on some parameters $\varphi$, the adiabatic connection is given by
\begin{equation*}
\mathcal{A} = P_\perp \d P, \qquad P_\perp = 1-P.
\end{equation*}

A Hermitian structure is defined on the tangent space at $\varphi$ by 
\begin{equation*}
h := 2 \tr(\mathcal{A} \otimes \mathcal{A}^*)
\end{equation*}
The symmetric and antisymmetric parts of $h$ define the Fubini-Study metric $g$ and the adiabatic curvature $\omega$, namely
\begin{equation*}
h=:g - \iu \omega.
\end{equation*}
Note that $\otimes$ above and in the rest of this section refers to the tensor algebra generated by $d\varphi$, not to the Hilbert space tensor product. The parameter space is thereby endowed with a natural geometric structure. 

We shall describe the geometric structure associated to squeezed states in holomorphic coordinates. We recall that $\varphi^*$ and $S\varphi S$ are inverses, which by~(\ref{stared}, \ref{inverse}) is equivalent to either line of
\begin{align}
U^* U - V^* V &=1,  \qquad U^* \overline{V} = V^* \overline{U}, \label{firstline}\\
U U^* - \overline{V} V^t  &=1,  \qquad UV^* = \overline{V} U^t. \nonumber 
\end{align}
The normalized squeezed state $\psi$ associated with the Bogoliubov transformation $\varphi$ satisfies (\ref{SqSt2}). Since $U$ is invertible in view of the first relation in (\ref{firstline}), an equivalent set of equations is
\begin{equation}\label{zformula}
(a + Z a\str)\psi = 0,\qquad Z:= -(U\str)^{-1} V\str,
\end{equation}
where $a=(a_1,\ldots,a_N)$ as before. The two relations (\ref{firstline}) respectively imply, after inverting the second one,
\begin{equation}
\label{zzformula}
Z Z\str = 1 - (U U\str)^{-1} < 1,\qquad Z=Z^t.
\end{equation}
Conversely, if $Z$ satisfies these two properties, then it can be written as in (\ref{zformula}) with $U$, $V$ unique up to multiplication by a unitary from the right. By introducing the bilinear expression $(a,b)=\sum_ia_ib_i$, we have $(a\str,Z a\str)= \sum_{ij} Z_{ij}  a_i^* a_j^*$ and
\begin{equation*}
\ep{-\frac{1}{2} (a\str,Z a\str)}a\ep{\frac{1}{2} (a\str,Z a\str)} =  a + Z a\str.
\end{equation*}
Hence
\begin{equation*}
(a+Z a\str)\ep{-\frac{1}{2} (a\str,Z a\str)}\psi_0 = \ep{-\frac{1}{2} (a\str,Z a\str)}a\psi_0 = 0,
\end{equation*}
where $\psi_0$ is the Fock vacuum, so that $\ep{-\frac{1}{2} (a\str,Z a\str)}\psi_0$ is proportional to $\psi$. The normalization constant is given by
\begin{equation}\label{psiNormalization}
\bigl\Vert \ep{-\frac{1}{2} (a\str,Z a\str)} \psi_0\bigr\Vert^2 = \det(1-Z Z\str)^{-1/2},
\end{equation}
whence
\begin{equation*}
\psi = \det(1-Z Z\str)^{1/4}\ep{-\frac{1}{2} (a\str,Z a\str)}\psi_0.
\end{equation*}
The norm~(\ref{psiNormalization}) can be computed using Takagi's factorization $Z= W^t  D W$, where $W$ is unitary and $D$ is the diagonal matrix of singular values. With this,
\begin{equation*}
(a\str, Z a\str) = (b\str, D b\str),\qquad b\str = W a\str
\end{equation*}
is a Bogoliubov transformation preserving the vacuum state. Hence,
\begin{align*}
\bigl\Vert \ep{-\frac{1}{2} (a\str,Z a\str)} \psi_0\bigr\Vert^2 &= \prod_{i=1}^N\langle \ep{-\frac{1}{2}D_i b_i\str b_i\str}\psi_0,\ep{-\frac{1}{2}D_i b_i\str b_i\str}\psi_0\rangle \\
&=\prod_{i=1}^N\sum_{k=0}^\infty \frac{D_i^{2k}}{2^{2k}(k!)^2}\langle \psi_0,b_i^{2k} (b_i\str)^{2k}\psi_0\rangle \\
&=\prod_{i=1}^N\sum_{k=0}^\infty \frac{D_i^{2k}(2k)!}{2^{2k}(k!)^2} = \prod_{i=1}^N(1-D_i^2)^{-1/2},
\end{align*}
which is~(\ref{psiNormalization}). 

Summarising, the set of squeezed states corresponds to the manifold 
\begin{equation*}
\mathcal{M} = \{Z \in \mathrm{GL}(N, \mathbb{C}) : Z = Z^t, ZZ\str < 1\}.
\end{equation*}
This is an open subset of $\mathbb{C}^{N(N+1)/2} \simeq \{Z_{ij},\,1\leq i \leq j \leq N \}$ and inherits the natural complex structure $J$ and Dolbeault decomposition $d = \partial + \bar \partial$. We write $\d$ instead of $d$ if it just acts on the first factor to its right. We denote by $\d Z$ (resp.~$\d Z^*$) a matrix valued 1-form with entries $\d Z_{ij}$ (resp.~$\d\overline{{Z}_{ji}}$). Moreover $\mathcal{M}$ carries the family of projections $P=\ket{\psi} \bra{\psi}$.

\begin{thm}
\label{Ngeom} The family $P$ turns $\mathcal{M}$ into a K\"{a}hler manifold
$(\mathcal{M},\, g,\, \omega)$ with Hermitian structure
\begin{equation}\label{hstrct}
h =\tr \bigl((1-ZZ\str)^{-1}\d Z \otimes(1-Z\str Z)^{-1} \d Z\str\bigr).
\end{equation}
It is also expressed as
$$
h = \tr \bigl((U\str \d Z \overline U) \otimes (U\str \d Z \overline U)^* \bigr),
$$
where $U$ is determined by $Z$ as in (\ref{zzformula}) and to sufficient extent so as to make the r.h.s. well-defined.
\end{thm}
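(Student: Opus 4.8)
The plan is to reduce $h=2\tr(\mathcal{A}\otimes\mathcal{A}\str)$ to a Fubini--Study K\"ahler potential and then evaluate that potential by matrix calculus. Write the normalized squeezed vector as $\psi=\Vert\phi\Vert^{-1}\phi$ with $\phi=\ep{-\frac12(a\str,Za\str)}\psi_0$ the unnormalized \emph{holomorphic} vector, so that by~(\ref{psiNormalization}) one has $\langle\phi,\phi\rangle=\det(1-ZZ\str)^{-1/2}$. Since $\phi$ is proportional to $\psi$, the orthogonal projection annihilates it, $P_\perp\phi=0$; hence in $\mathcal{A}=P_\perp\,\d\psi\,\bra{\psi}$ both the differential of the normalization factor and the antiholomorphic part drop out, leaving $\mathcal{A}=\Vert\phi\Vert^{-1}P_\perp(\d\phi)\bra{\psi}$ with $\d\phi=\partial\phi$ holomorphic. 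Inserting this into $h=2\tr(\mathcal{A}\otimes\mathcal{A}\str)$, using $P_\perp=1-\ket{\psi}\bra{\psi}$ and $\langle\psi,\psi\rangle=1$, I obtain the standard rank-one formula
\[
h = 2\Bigl(\tfrac{\langle\partial\phi,\partial\phi\rangle}{\langle\phi,\phi\rangle} - \tfrac{\langle\partial\phi,\phi\rangle\langle\phi,\partial\phi\rangle}{\langle\phi,\phi\rangle^2}\Bigr) = 2\,\partial\bar\partial\log\langle\phi,\phi\rangle ,
\]
that is, $h=2\partial\bar\partial K$ with global K\"ahler potential $K=\log\langle\phi,\phi\rangle=-\tfrac12\tr\log(1-ZZ\str)$.

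Next I would carry out the matrix calculus. Abbreviate $M=1-ZZ\str$. Differentiating antiholomorphically (so $\bar\partial Z=0$, $\bar\partial M=-Z\,\d Z\str$) gives $\bar\partial K=\tfrac12\tr(M^{-1}Z\,\d Z\str)$. Applying $\partial$ (under which $\d Z\str$ is constant while $\partial M=-\d Z\,Z\str$, $\partial M^{-1}=M^{-1}\d Z\,Z\str M^{-1}$) to the coefficient $M^{-1}Z$ and using the push-through identity $Z\str(1-ZZ\str)^{-1}=(1-Z\str Z)^{-1}Z\str$ yields
\[
\partial(M^{-1}Z)=M^{-1}(\d Z)\bigl(Z\str M^{-1}Z+1\bigr)=M^{-1}(\d Z)(1-Z\str Z)^{-1}.
\]
Feeding this back gives $h=2\partial\bar\partial K=\tr\bigl((1-ZZ\str)^{-1}\d Z\otimes(1-Z\str Z)^{-1}\d Z\str\bigr)$, which is~(\ref{hstrct}); the constraint $Z=Z^t$ is respected automatically since $\d Z$ is a symmetric matrix of one-forms. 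The existence of the global potential $K$ then makes the K\"ahler property immediate: the fundamental two-form is $\omega=\iu\,\partial\bar\partial K$, closed because $d\omega=\iu(\partial+\bar\partial)\partial\bar\partial K=0$ by $\partial^2=\bar\partial^2=0$ and $\partial\bar\partial=-\bar\partial\partial$, while $g$ is a genuine (positive-definite) metric since $M^{-1}>0$ and $(1-Z\str Z)^{-1}>0$, so $\tr(M^{-1}\,\d Z\,(1-Z\str Z)^{-1}\,\d Z\str)>0$ on nonzero symmetric $\d Z$.

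For the alternative expression I would only use~(\ref{zzformula}). From $ZZ\str=1-(UU\str)^{-1}$ one reads off $(1-ZZ\str)^{-1}=UU\str$, and since $Z=Z^t$ gives $Z\str=\overline Z$ and $Z\str Z=\overline{ZZ\str}$, conjugation yields $(1-Z\str Z)^{-1}=\overline{UU\str}=\bar U U^t$. Substituting both into~(\ref{hstrct}) and noting $(U\str\,\d Z\,\overline U)\str=U^t\,\d Z\str\,U$, the cyclicity of the operator trace (which does not reorder the form slots) identifies
\[
\tr\bigl(UU\str\,\d Z\,\bar U U^t\,\d Z\str\bigr)=\tr\bigl((U\str\,\d Z\,\overline U)\otimes(U\str\,\d Z\,\overline U)\str\bigr),
\]
as claimed; here $U$ enters only through the combinations $UU\str$ and $\bar U U^t$ fixed by $Z$, which is the sense in which it is ``determined to sufficient extent''.

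I expect the only real obstacle to be the second step: keeping the ordering of the matrix factors and the two differential-form slots straight while differentiating $M^{-1}Z$, and invoking the correct push-through identity $Z\str(1-ZZ\str)^{-1}=(1-Z\str Z)^{-1}Z\str$. The reduction to the potential $K$ in the first step and the K\"ahler conclusion are standard, and the equivalence of the two expressions for $h$ is a short substitution.
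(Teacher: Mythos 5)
Your proposal is correct and takes essentially the same route as the paper's proof: both reduce $h=2\tr(\mathcal{A}\otimes\mathcal{A}^*)$ to the K\"ahler potential $\log\det(1-ZZ^*)^{-1/2}$ via the holomorphic unnormalized vector $\ep{-\frac{1}{2}(a^*,Za^*)}\psi_0$, carry out the same matrix calculus hinging on the identity $1+Z^*(1-ZZ^*)^{-1}Z=(1-Z^*Z)^{-1}$, and get the alternative expression by substituting $(1-ZZ^*)^{-1}=UU^*$ and $(1-Z^*Z)^{-1}=\overline{U}U^t$. The only (harmless) cosmetic difference is in the K\"ahler property itself: you argue closedness of $\omega$ from the existence of the global potential and add an explicit positivity check, while the paper deduces the compatibility $g(JX,X')=\omega(X,X')$ from $\mathcal{A}(JX)=\iu\mathcal{A}(X)$ and invokes the general closedness of the adiabatic curvature.
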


\begin{proof}
The map
$$
Z \mapsto \ket{\tilde\psi} := \ep{-\frac{1}{2} (a\str,Z a\str)}\psi_0,
$$
is holomorphic, but not so after normalization, $\ket{\psi} = \No^{-1} \ket{\tilde{\psi}}$. Nonetheless we have 
$$P_\perp \bar \partial\ket{\psi} =0,\qquad 
P_\perp \partial\ket{\psi} =\No^{-1} P_\perp \partial\ket{\tilde\psi},$$
where the first equation implies $P_\perp\bar \partial( \ket{ \psi}\bra{ \psi})=0$ and thus 
\begin{equation*}
\mathcal{A} = P_\perp \partial P=P_\perp (\partial\ket{\psi})\bra{\psi}.
\end{equation*}
With the help of the defining relation $J \partial = \iu \partial$ we then get $\mathcal{A}(J X) = \iu \mathcal{A}(X)$ and $h(JX,X') = \iu  h(X,X')$ for any tangent vectors $X,X'$. The latter is equivalent to the consistency condition $g(JX,X')  = \omega(X,X')$ required for a K\"{a}hler manifold. The remaining condition that $\omega$ is closed, \ie $\d \omega =0$, is always satisfied by the adiabatic curvature.

We claim that 
$$
 h = 2 \partial \otimes \bar{\partial} \log \No^2, 
$$
which implies that $2\log \No^2$ is the K\"{a}hler potential, \ie 
$\omega=\iu\partial \bar{\partial} \log \No^2$.

This follows by
\begin{align*}
  \tr(\mathcal{A} \otimes \mathcal{A}^*) &=\tr(P_\perp \partial\ket{\psi}\otimes \bar \partial\bra{\psi})
\\
 &= \frac{1}{\No^4} \bigl( \No^2 \tr(\partial \ket{\tilde{\psi}} \otimes \bar{\partial} \bra{\tilde{\psi}} ) - \partial \langle{\tilde{\psi}} | \tilde{\psi} \rangle \otimes \bar{\partial} \langle \tilde{\psi} | \tilde{\psi} \rangle \bigr) \\
   &=\frac{1}{\No^4} \bigl( \No^2 \partial \otimes \bar{\partial} \No^2 - \partial \No^2 \otimes \bar{\partial} \No^2 \bigr) 
   = \partial \otimes \bar{\partial} \log \No^2.
\end{align*}
With (\ref{psiNormalization}) and
$\log\No^2 = \log\det (1- Z Z\str)^{-1/2} = -\frac{1}{2} \tr \log (1- Z Z\str)$ 
we get
$$
h = -\partial \otimes \bar{\partial} \tr \log (1- ZZ\str).
$$
Differentiating under the trace and making use of $\d {A}^{-1} = -A^{-1} \d A A^{-1}$, we find
$$
h = \tr\bigl((1-ZZ\str)^{-1} \d Z \otimes \d Z^* + (1-ZZ\str)^{-1} \d Z Z^* \otimes(1-ZZ\str)^{-1} Z \d Z^* \bigr).
$$
Eq.~(\ref{hstrct}) follows by regrouping terms using the identity 
$$1+Z\str (1-ZZ\str)^{-1} Z = (1-Z\str Z)^{-1},$$
which is seen from $Z(1-Z\str Z)=(1-ZZ\str)Z$. The alternate
expression then follows from (\ref{zzformula}), also by way of $Z\str
Z=(ZZ\str)^t$. We observe that it is not affected when $U$ is
multiplied by a unitary from the right, as allowed by $Z$.
\end{proof}

\begin{proof}[Proof of Proposition~\ref{sqpr}] In the case $N=1$ we have 
$h= (1 -|z|^2)^{-2}\d z\otimes \d \bar{z}$
and the result follows by decomposing $h=g - \iu \omega$.
\end{proof}

\section{The classical equations of motion}
\label{sec:flow}

As before, we set $\rate = 1$. The canonical equations~(\ref{ClassEq}) are equivalent to Newton's equation
\begin{equation}\label{Newton}
\ddot x(t) + t^2 x(t) = 0,
\end{equation}
which takes the form of a special case of the Weber differential equation. The two linearly independent solutions may be chosen even and odd in $t$,
\begin{equation*}
x_\pm (-t) = \pm x_\pm(t).
\end{equation*}
They are unique up to multiples. Imposing a normalization on the Wronskian (which is constant in $t$),
\begin{equation}
\label{Wronskian}
W(x_+,x_-) := x_+\dot x_- - \dot x_+ x_-= 1,
\end{equation}
leaves one free parameter. The general (real) solution of~(\ref{Newton}) is

\begin{equation*}
x(t) = \rtn_+ x_+(t) + \rtn_- x_-(t)
\end{equation*}
with arbitrary coefficients $\rtn_\pm\in\bbR$, which are in turn determined by the solution itself:
\begin{equation}\label{rtnW}
\rtn_- = W(x_+,x),\qquad \rtn_+ = -W(x_-,x).
\end{equation}

We shall derive the propagator for some time interval~$[t_1,t_2]$ as a map on phase space~$\bbR^2\ni(x,p)$, and express it in the complex coordinate $a$ 
\begin{equation}\label{xpa}
x = \frac{1}{\sqrt{2\omega}}(\bar a + a), \qquad p= \iu \sqrt{\frac{\omega}{2}}(\bar a - a),
\end{equation}
for some fixed $\omega>0$, as defined in~(\ref{Classical a}) and for reasons explained there. The solutions $x_\pm(t)$ give rise to
\begin{equation}
\label{eq:18}
a_\pm(t) = \frac{1}{\sqrt{2\omega}}(\omega x_\pm(t) +\iu \dot x_\pm(t))
\end{equation}
with
\begin{equation}\label{aSymm}
a_\pm(-t) = \pm \overline{a_\pm(t)}.
\end{equation}
The general solution
\begin{equation}\label{GenSol a}
a(t) = \rtn_+ a_+(t) + \rtn_- a_-(t),
\end{equation}
being complex, determines both real amplitudes $\rtn_\pm$, without resorting to derivatives as in~(\ref{rtnW}): Expressing there $x$ and $\dot x = p$ by~(\ref{xpa}), we obtain
\begin{equation}\label{rtns}
\begin{aligned}
\rtn_+ &= \iu \overline{a_-(t)} a(t) + \mathrm{c.c.},\\
\rtn_- &= -\iu \overline{a_+(t)} a(t) + \mathrm{c.c.}.
\end{aligned}
\end{equation}
Using~(\ref{GenSol a}) at $t = t_2$ and~(\ref{rtns}) at $t = t_1$, we get
\begin{multline*}
a(t_2) =  \bigl(\iu a_+(t_2) \overline{a_-(t_1)} - \iu a_-(t_2) \overline{a_+(t_1)}\bigr) a(t_1)\\ + \bigl(- \iu a_+(t_2) {a_-(t_1)} + \iu a_-(t_2) {a_+(t_1)}\bigr) \overline{a(t_1)}.
\end{multline*}
The propagator is thus of the form~(\ref{ClassicalFlow}) with
\begin{equation}\label{UV}
\begin{aligned}
U &= \iu a_+(t_2) \overline{a_-(t_1)} - \iu a_-(t_2) \overline{a_+(t_1)}, \\
\overline{V} &= - \iu a_+(t_2) {a_-(t_1)} + \iu a_-(t_2) {a_+(t_1)}.
\end{aligned}
\end{equation}
In particular, for $t_1 = -t$, $t_2 = t$, we have by~(\ref{aSymm})
\begin{equation}\label{eq:22}
U = -2 \iu a_+(t) a_-(t), \qquad
\overline{V} = 2\iu\Re\bigl( \overline{a_+(t)} {a_-(t)}\bigr). 
\end{equation}
According to~\cite[19.1.5 and 19.2.1]{AS}, the even and odd solutions of~(\ref{Newton}) are the parabolic cylinder functions
$$
\ep{-\iu t^2/2} M\bigl(\frac{1}{4}, \frac{1}{2}, \iu t^2\bigr), \qquad t \ep{-\iu t^2/2} M\bigl(\frac{3}{4}, \frac{3}{2}, \iu t^2\bigr)
$$
expressed in terms of the confluent hypergeometric function $M(a,b,z)$. Using its differentiable asymptotics for $z \to \infty$ \cite[13.5.1]{AS}, we find for $t\to+\infty$
\begin{gather*}
x_\pm(t) = x_\pm t^{-1/2} (\cos\theta_\pm(t)+ o(1)), \\
\theta_+(t) = \frac{t^2}{2} -\frac{\pi}{8}, \qquad \theta_-(t) = \frac{t^2}{2} - \frac{3 \pi}{8}
\end{gather*}
with arbitrary amplitudes $x_\pm$, as well as similar expressions for $\dot{x}_\pm(t)$ with $t^{-1/2}$ replaced by $t^{1/2}$ and $\cos$ by $-\sin$. We observe that 
\begin{equation*}
\theta_+(t) -\theta_-(t) =\frac{\pi}{4}, \qquad
\theta_+(t) +\theta_-(t) =t^2 - \frac{\pi}{2},
\end{equation*}
and so obtain from~(\ref{Wronskian})
\begin{align*}
W(x_+, x_-) &= -x_+ x_- (\cos\theta_+\sin\theta_- - \cos\theta_- \sin\theta_+ )\\		   
&= x_+ x_- \sin(\theta_+ - \theta_-) = \frac{x_+ x_-}{\sqrt{2}},
\end{align*}
implying $x_+ x_- = \sqrt{2}$. We also obtain from~(\ref{eq:18}) with $\omega = t$,
\begin{equation}\label{apm}
a_\pm(t) = \frac{x_\pm}{\sqrt{2}} \ep{-\iu \theta_\pm(t)} + o(1), \qquad (t \to \infty).
\end{equation}
In particular~(\ref{eq:22}) becomes
\begin{align*}
U &= - \sqrt{2} \iu \ep{-\iu (t^2 - \frac{\pi}{2})}+o(1), \nonumber \\
\overline{V} &= 2\iu \Re( \frac{1}{\sqrt{2}}\ep{\iu(\theta_+ - \theta_-)}) + o(1) = \iu + o(1),
\end{align*}
which are the asymptotic expressions used in the proof of Theorem~\ref{THM}.


\section{Extensions}\label{sec:Extensions}
Two extensions of the above problem are considered.
\subsection{Asymmetric time intervals}
In a slight generalization of Theorem~\ref{THM} we consider asymmetric initial and final times, $t_1\to-\infty$ and $t_2\to+\infty$, together with initial (ground) and final (squeezed) states now understood with respect to different Hamiltonians, $H_{t_1} \neq H_{t_2}$. The conclusions are unchanged, up to the angle now being
\begin{equation*}
\theta = -\rate t_2^2 - \frac{\pi}{2}+ o(1).
\end{equation*}
This is in line with the remark made earlier by which the squeezed state is formed at the spectral collapse and only the time $t_2$ elapsed since then contributes to its rotation. 

The argument is as follows ($\rate=1$). We introduce coordinates $a_i$, $\overline{a_i}$, ($i=1,2$) like in (\ref{Classical a}) but with different frequencies $\omega_i$. We so associate $a_i(t)$ to any solution $x(t)$, and in particular $a_{i\pm }(t)$ to $x_\pm(t)$, see (\ref{eq:18}). Then (\ref{GenSol a}), now decorated with $i=1,2$, still holds true, and in fact with common coefficients $\rtn_\pm$, since those are determined by $x(t)$ without reference to $\omega_i$, cf.~(\ref{rtnW}). As a result the propagator between times $t_1$ and $t_2$ relating $a_1(t_1)$ to $a_2(t_2)$ (with conjugates) is the Bogoliubov transformation
\begin{align*}
U &= \iu a_{2+}(t_2) \overline{a_{1-}(t_1)} - \iu a_{2-}(t_2) \overline{a_{1+}(t_1)}, \\
\overline{V} &= - \iu a_{2+}(t_2) {a_{1-}(t_1)} + \iu a_{2-}(t_2) {a_{1+}(t_1)},
\end{align*}
cf. (\ref{UV}). We then pick $\omega_i=|t_i|$ as understood in the claim, but unlike there we first let both $t_1,t_2\to +\infty$. Then (\ref{apm}) together with
\begin{equation*}
\theta_+(t_1) -\theta_-(t_2) =\frac{t_1^2-t_2^2}{2}+\frac{\pi}{4}, \qquad
\theta_-(t_1) -\theta_+(t_2) =\frac{t_1^2-t_2^2}{2}-\frac{\pi}{4}
\end{equation*}
lead to 
\begin{equation}\label{ClassicalSol_TwoTimes}
U=\ep{\iu (t_1^2-t_2^2)/2}+ o(1),\qquad \overline{V}=o(1).
\end{equation}
We observe that, up to $o(1)$, this transformation does not squeeze the vacuum and rotates squeezed states by the angle $t_1^2-t_2^2$. Its diagonal form is in line with the adiabatic theorem valid away from the collapse: The Heisenberg evolution respects the instantaneous creation and annihilation operators and the Schr\"odinger evolution respects the instantaneous eigenstates.

We finally return to $t_1\to-\infty$. We split the interval $[t_1,t_2]$ at the point $|t_1|$, possibly lying outside of it. The combined transformation of~(\ref{ClassicalSol}) for $t=\vert t_1\vert$ followed by~(\ref{ClassicalSol_TwoTimes}) is given by
\begin{equation*}
 U = -\sqrt 2 \iu \ep{-\iu(t_1^2+t_2^2+\frac{\pi}{2})/2}+o(1),\qquad\overline{ V}=\iu\ep{\iu(t_1^2-t_2^2)/2}+o(1),
\end{equation*}
which yields a joint squeezing ratio of
\begin{equation*}
\frac{\mu}{\lambda} = -\frac{\overline{ V}}{\overline{ U}} = -\frac{1}{\sqrt 2}\ep{-\iu(t_2^2+\frac{\pi}{2})} + o(1).
\end{equation*}

\noindent\textbf{Remark.} One could further consider the case $t_1\to-\infty$ while $t_2$ remains finite. This was represented in Figure~\ref{fig:SqueezingPlot}, where we show the pattern traced by the solution in the Poincar\'{e} disk relative to the instantaneous ground state, and plot its squeezing $\tanh r$ as a function of $t_2$. 

\subsection{Gapped case}

A second extension is concerned with the gapped case, such as $\omega_t^2 = \rate^2 t^2 + g^2$. The scaling argument of Lemma~\ref{self} shows that the squeezing and the phase slip of the future asymptotic state depend only on the ratio $\delta^2 := g^2/\rate$.  For fixed $g>0$, the result then trivializes in the adiabatic limit $\rate \to 0$, in the sense that instantaneous ground states are respected by the dynamics. A non-trivial result is obtained in the cross-over regime given by the scaling $g = \sqrt \rate \delta$ with fixed $\delta$.

Without loss, we shall consider a symmetric time interval $[-t,t]$ and forgo for simplicity the phase of the squeezing.
\begin{prop}
\label{PROP}
In the situation of Theorem~\ref{THM}, but with $\omega_t^2 = \rate^2 t^2 + g^2$, we have
\begin{equation*}
\tanh r  = \frac{1}{\sqrt{1+\ep{\pi\delta^2}}}, \qquad \delta^2 = \frac{g^2}{\rate},
\end{equation*}
\ie $r = \log\bigl( \sqrt{1+\ep{-\pi \delta^2}} + \sqrt{\ep{-\pi \delta^2}}\bigr)$ in the limit of an infinite time interval.
\end{prop}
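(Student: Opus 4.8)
My plan is to rerun the classical-flow analysis of Section~\ref{sec:flow} with the frequency profile $\omega_t^2 = t^2 + \delta^2$, which by the scaling of Lemma~\ref{self} is the only case needing treatment (set $\rate = 1$, so $\delta^2 = g^2$). Since this profile is still even in $t$, the entire apparatus of Section~\ref{sec:flow} carries over verbatim: Newton's equation $\ddot x + (t^2+\delta^2)x = 0$ admits even and odd solutions $x_\pm$, normalised by the Wronskian~(\ref{Wronskian}), and the propagator over $[-t,t]$ is given by the even/odd formulas~(\ref{eq:22}), with $\tanh r = |\overline V|/|U|$ by~(\ref{Csqst}). As we forgo the phase, it suffices to compute this modulus.

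First I would record the general shape of the large-$t$ asymptotics. WKB gives $x_\pm(t) = A_\pm t^{-1/2}\cos\Theta_\pm(t)(1+o(1))$ with $\Theta_\pm(t) = \tfrac{t^2}{2} + \tfrac{\delta^2}{2}\log t + \phi_\pm$, the logarithmic term being common to both and the amplitudes $A_\pm$ real. Feeding this into~(\ref{eq:18}) as in~(\ref{apm}) yields $a_\pm(t) \to \tfrac{A_\pm}{\sqrt2}\ep{-\iu\Theta_\pm}$, and then~(\ref{eq:22}) gives $U = -\iu A_+A_-\ep{-\iu(\Theta_++\Theta_-)}$ and $\overline V = \iu A_+A_-\cos(\phi_+-\phi_-)$, because the rapidly varying parts cancel in $\Theta_+ - \Theta_- = \phi_+ - \phi_-$. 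Hence $\tanh r = |\cos(\phi_+-\phi_-)|$; the Wronskian~(\ref{Wronskian}) only fixes $A_+A_-\sin(\phi_+-\phi_-)=1$, consistent with $|U|^2-|\overline V|^2=1$ but dropping out of the ratio. The problem thus reduces to the relative asymptotic phase, and the claim becomes $\cos^2(\phi_+-\phi_-) = (1+\ep{\pi\delta^2})^{-1}$.

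To get the phases explicitly I would identify the two solutions with the confluent hypergeometric functions $x_+ = \ep{-\iu t^2/2}M(\tfrac14+\tfrac{\iu\delta^2}4,\tfrac12,\iu t^2)$ and $x_- = t\,\ep{-\iu t^2/2}M(\tfrac34+\tfrac{\iu\delta^2}4,\tfrac32,\iu t^2)$, which are real (by Kummer's transformation, since $b-a=\bar a$ in each case) and reduce to the functions used in Section~\ref{sec:flow} when $\delta=0$. Reading off the coefficient of $\ep{\iu t^2/2}$ from the large-argument expansion of $M$ on the anti-Stokes line $\arg(\iu t^2)=\tfrac\pi2$, and using that the prefactors $\Gamma(\tfrac12),\Gamma(\tfrac32)$ are real, the constant phases work out to $\phi_+ = -\tfrac\pi8 - \arg\Gamma(\tfrac14+\tfrac{\iu\delta^2}4)$ and $\phi_- = -\tfrac{3\pi}8 - \arg\Gamma(\tfrac34+\tfrac{\iu\delta^2}4)$, whence $\phi_+-\phi_- = \tfrac\pi4 + \arg\frac{\Gamma(3/4+\iu\delta^2/4)}{\Gamma(1/4+\iu\delta^2/4)}$.

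The decisive step is the evaluation of this $\Gamma$-quotient phase. Writing $\psi := \arg\frac{\Gamma(3/4+\iu\delta^2/4)}{\Gamma(1/4+\iu\delta^2/4)}$ and $y:=\delta^2/4$, I would use $\Gamma(\tfrac34+\iu y)=\overline{\Gamma(\tfrac34-\iu y)}$ together with the reflection formula $\Gamma(\tfrac14+\iu y)\Gamma(\tfrac34-\iu y)=\pi/\sin(\pi(\tfrac14+\iu y))$; since $|\Gamma(\tfrac14+\iu y)|^2$ is a positive real this collapses to $\psi = \arg\sin(\pi(\tfrac14+\iu y))$, and $\sin(\tfrac\pi4+\iu\pi y)=\tfrac{1}{\sqrt2}(\cosh\pi y+\iu\sinh\pi y)$ gives $\tan\psi = \tanh(\pi\delta^2/4)$. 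The addition formula then yields $\tan(\phi_+-\phi_-)=\tan(\tfrac\pi4+\psi)=\frac{1+\tanh(\pi\delta^2/4)}{1-\tanh(\pi\delta^2/4)}=\ep{\pi\delta^2/2}$, so $\tanh^2 r = \cos^2(\phi_+-\phi_-)=(1+\ep{\pi\delta^2})^{-1}$, which is the claim; the stated logarithmic form for $r$ then follows from $\sinh r = \ep{-\pi\delta^2/2}$. I expect the only genuine difficulty to be the careful bookkeeping of the parabolic-cylinder asymptotics on the anti-Stokes line, where both exponential branches contribute at the same order and the $o(1)$ errors must be controlled uniformly; everything after the reflection-formula identity is elementary.
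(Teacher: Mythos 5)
Your proposal is correct and follows essentially the same route as the paper's proof: reduction to $\rate=1$ by scaling, identification of the even/odd solutions with the confluent hypergeometric functions $\ep{-\iu t^2/2}M\bigl(\tfrac14(1+\iu\delta^2),\tfrac12,\iu t^2\bigr)$ and $t\,\ep{-\iu t^2/2}M\bigl(\tfrac14(3+\iu\delta^2),\tfrac32,\iu t^2\bigr)$, insertion of their large-$t$ asymptotics into~(\ref{eq:22}) and~(\ref{Csqst}), and Euler's reflection formula as the decisive identity. The only difference is organizational — you carry real amplitude--phase data $(A_\pm,\phi_\pm)$ and evaluate $\arg\bigl(\Gamma(\tfrac34+\iu\delta^2/4)/\Gamma(\tfrac14+\iu\delta^2/4)\bigr)$ via the tangent addition formula, whereas the paper keeps the complex coefficients $\gamma_\pm=\Gamma(\cdot)^{-1}$ and computes $\bigl|\Re(\gamma_+\overline{\gamma_-}\ep{\iu\pi/4})\bigr|/|\gamma_+\overline{\gamma_-}|$ directly — and both yield $\tanh r=(1+\ep{\pi\delta^2})^{-1/2}$ by the same mechanism.
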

\noindent\textbf{Remark.} The result monotonically interpolates
between $\tanh r =1/ \sqrt 2$ for $\delta = 0$ and $r=0$ for
$\delta\to\infty$. \\

\noindent\textbf{Remark.} The fidelity is 
\begin{equation*}
p_0 = \frac{1}{\sqrt{1 + \ep{-\pi g^2/\alpha}}}, 
\end{equation*}
as seen from $p_0=\cosh r = (1-\tanh^2r)^{-1/2}$. For $g^2/\alpha\gg 1$ it
approaches $1$ and the tunneling is asymptotically given by
\begin{equation*}
1-p_0\approx \frac{1}{2}\ep{-\pi g^2/\alpha}.
\end{equation*}
It should be compared with the Landau-Zener formula for an avoided
crossing sharing the same gap $\sqrt{\rate^2 t^2 + g^2}$, which reads
$1-p_0=\ep{-\pi g^2/2\alpha}$. \\

\begin{proof} By scaling $s = \sqrt\rate t$, we reduce matters to $\rate = 1$. Weber's equation~(\ref{Newton}) now reads
\begin{equation*}
\ddot x(t) + (t^2 + \delta^2)x(t) = 0.
\end{equation*}
Using the same references as before, the even and odd solutions are 
\begin{equation*}
\ep{-\iu t^2/2} M\bigl(\frac{1}{4}(1+\iu\delta^2), \frac{1}{2}, \iu t^2\bigr), \qquad t \ep{-\iu t^2/2} M\bigl(\frac{1}{4}(3+\iu\delta^2), \frac{3}{2}, \iu t^2\bigr)
\end{equation*}
and have asymptotics
\begin{equation*}
x_\pm(t) = x_\pm t^{-1/2}\bigl(\gamma_\pm \ep{\iu\theta_\pm(t)} + \mathrm{c.c.} + o(1)\bigr),
\end{equation*}
as $t\to\infty$, where
\begin{equation*}
\gamma_+ = \Gamma\bigl((1+\iu\delta^2)/4\bigr)^{-1},\qquad \gamma_- = \Gamma\bigl((3+\iu\delta^2)/4\bigr)^{-1},
\end{equation*}
and
\begin{equation*}
\theta_+(t) = \frac{t^2}{2} +\frac{\delta^2}{2}\log t- \frac{\pi}{8}, \qquad \theta_-(t) = \frac{t^2}{2} +\frac{\delta^2}{2}\log t - \frac{3 \pi}{8}.
\end{equation*}
The real constants $x_\pm$, which depend on $\delta$, are supposed to obey the normalization condition $W(x_+,x_-) = 1$. For $\delta = 0$, we recover the previous expressions though the same normalization of $x_\pm$ is recovered only after replacing $\gamma_\pm$ by $1/2$. We also observe that the logarithmic correction reflects the WKB approximation, in that
\begin{equation*}
\theta_\pm'(t) = t + \frac{\delta^2}{2t} = \sqrt{t^2 + \delta^2} + o(1).
\end{equation*}
For $\omega_t = t + o(1)$, we find
\begin{equation*}
a_\pm(t) = \sqrt 2 x_\pm \overline{\gamma_\pm}\ep{-\iu\theta_\pm(t)} + o(1)
\end{equation*}
by~(\ref{eq:18}). Now, (\ref{eq:22}) yields
\begin{align*}
U &= -4\iu x_+x_-\overline{\gamma_+}\overline{\gamma_-}\ep{-\iu(\theta_+(t)+\theta_-(t))} + o(1),\\
\overline V &= 4\iu x_+x_- \mathrm{Re} \bigl(\gamma_+ \overline{\gamma_-}\ep{\iu\pi / 4} \bigr) + o(1)
\end{align*}
and
\begin{equation}\label{deltasqueezing}
\ep{\iu\theta}\tanh r = -\frac{\mathrm{Re} \bigl(\gamma_+ \overline{\gamma_-}\ep{\iu\pi / 4} \bigr)}{{\gamma_+}{\gamma_-}}\ep{-\iu(\theta_+(t)+\theta_-(t))} + o(1)
\end{equation}
by~(\ref{Csqst}). We then use Euler's reflection formula $\Gamma(1-z)\Gamma(z) =\pi/\sin(\pi z)$ and $ \overline{\Gamma(z)}=\Gamma(\bar z)$ to conclude that
\begin{equation*}
\gamma_+ \overline{\gamma_-} = \frac{1}{\pi}\sin\frac{\pi}{4}(1+\iu\delta^2) = \frac{1}{2\pi\iu}\bigl(\ep{\iu\pi /4}\ep{-\pi\delta^2 /4} - \ep{-\iu\pi /4}\ep{\pi\delta^2 /4}\bigr).
\end{equation*}
Thus, in the limit $t\to\infty$,
\begin{equation*}
\tanh r = \frac{\bigl\vert \mathrm{Re} (\gamma_+ \overline{\gamma_-}\ep{\iu\pi / 4})\bigr\vert }{\bigl\vert {\gamma_+}\overline{\gamma_-}\bigr\vert}
= \frac{\ep{-\pi\delta^2/4}}{\sqrt{\ep{\pi\delta^2/2} + \ep{-\pi\delta^2/2}}} = \frac{1}{\sqrt{1+\ep{\pi\delta^2}}}.
\end{equation*}
\end{proof}

\noindent\textbf{Remark.} The phase of the squeezing $\theta$ can be obtained from~(\ref{deltasqueezing}) as
\begin{equation*}
\theta = - t^2 - \delta^2 \log t -\frac{\pi}{2} - \mathrm{arg}({\gamma_+}{\gamma_-})  + o(1),
\end{equation*}
where $\mathrm{arg}({\gamma_+}{\gamma_-}) $ is independent of $t$, and $\mathrm{arg}({\gamma_+}{\gamma_-}) \to 0$ as $\delta\to 0$.


\section{Summary}\label{sec:Conclusions}

We considered a time-dependent quadratic Hamiltonian
$$
H_t =\frac{1}{2} (p^2 + \omega_t^2 x^2), \qquad \omega^2_t = \alpha^2 t^2 + g^2.
$$
We derived the solution of the associated driven Schr\"{o}dinger equation that is initiated at the instantaneous ground state in the distant past. The time evolved wave-function is squeezed upon crossing of the non-adiabatic region around $t=0$ and we determine its squeezing parameters asymptotically as $t \to \infty$; see Theorem~\ref{THM} for the case of non-avoided crossing $g=0$ and Proposition~\ref{PROP} for the case of avoided crossing $g>0$. In particular the probability $p_0$ to find the state in the instantaneous ground state (fidelity with respect to the ground state) as $t \to \infty$ is given by
$$
p_0 = \frac{1}{\sqrt{1 + \ep{-\pi g^2/\alpha}}}.
$$
For a non-zero $g$ and $\alpha \to 0$ the fidelity exponentially
approaches $1$, and hence the tunneling is given by a Landau-Zener
type formula. For $g = 0$ the fidelity is constant
and equal to $1/ \sqrt{2}$. This manifests breaking of the adiabatic
theory at points where infinitely many eigenvalues collide.\\

\noindent\textbf{Acknowledgments.} We thank T. Esslinger and his
group for discussions which led us to study this problem.  

\bibliography{oscillator}
\bibliographystyle{unsrt}


%
\end{document}